\documentclass{article}
\usepackage{spconf,amsmath,graphicx,hyperref}
\usepackage{amsmath,amsfonts}
\usepackage{algorithmic}
\usepackage{algorithm}
\usepackage{array}
\usepackage[caption=false,font=normalsize,labelfont=sf,textfont=sf]{subfig}
\usepackage{textcomp}
\usepackage{stfloats}
\usepackage{url}
\usepackage{verbatim}
\usepackage{graphicx}
\usepackage{cite}
\usepackage{comment}

\usepackage{amsmath,graphicx,hyperref}
\usepackage{amssymb}  
\usepackage{amsthm}
\usepackage{bm}      
\usepackage{booktabs}
\usepackage[normalem]{ulem}

\usepackage{xcolor}

\newtheorem{theorem}{Theorem}
      
\newtheorem{proposition}{Proposition}

\usepackage[normalem]{ulem}

\title{Power-MSE trade-off of Factorized Low-rank Approximated Computation Scheme with Memristors}

\name{
    Binyu Lu$^{\hspace{1pt} \star \hspace{1pt} \dagger}$ 
    \qquad Matthias Frey$^{\hspace{1pt} \star}$ 
    \qquad Stark Draper$^{\hspace{1pt} \dagger}$  
    \qquad Jingge Zhu$^{\hspace{1pt} \star}$ 
}

\address{
    $^{\star}$ Department of Electrical and Electronic Engineering, University of Melbourne \\
    $^{\dagger}$ Department of Electrical and Computer Engineering, University of Toronto
}

\begin{document}
\ninept
\maketitle

\begin{abstract}
\vspace{-5pt}
Memristor crossbars enable analog vector–matrix multiplication (VMM) which is promising for machine learning applications.
Scaling matrix entries to lower memristor conductance levels reduces power consumption but increases the impact of memristor programming noise on VMM accuracy.
To investigate how low-rank factorization can improve this trade--off, we extend the previously proposed factorized low-rank approximation scheme (FLAS) \cite{lu2026low} to support adjustable conductance scaling.
We then derive closed-form MSE and power expressions for both FLAS and baseline  VMM. 
Based on these expressions, we establish an analytical power--MSE trade--off framework to capture the coupled effects of approximation rank, replication allocation, and conductance scaling under constraints on memristor count and conductance scaling bounds. 
Numerical results demonstrate FLAS’s power--MSE advantage across matrices with different singular value spectra.
The power decomposition explains how TIA feedback resistance affect this advantage.
\end{abstract}

\vspace{-5pt}
\begin{keywords}
In-memory computing, memristor, vector-matrix multiplication, low-rank approximation, power efficiency.
\end{keywords}

\vspace{-5pt}

\vspace{-5pt}
\section{Introduction}
\vspace{-5pt}
In-memory computing (IMC) has emerged as a promising paradigm by performing computations directly where the data are stored \cite{silvano2025survey,ielmini2018memory}.
Memristor crossbars are developed as an attractive physical substrate for IMC. 
A memristor crossbar, as illustrated in Fig.~\ref{memristor}, is a two-dimensional array of memristors, where each memristor is a element with a programmable conductance \cite{yao2020fully}. 
It is able to implement vector-matrix multiplication (VMM) efficiently.
Matrix entries are represented as memristor programmed conductances.
Input vector entries are applied as row voltages. 
By Ohm's law and the Kirchhoff's current law \cite{sebastian2020memory}, the resulting column currents perform parallel multiply-and-accumulate operations, and thus realize the VMM. 
At the end of each column, a transimpedance amplifier (TIA) converts the total column current into an output voltage.
\begin{figure}[htbp]
	\centering
    \vspace{-10pt}
	\includegraphics[width = 0.8 \linewidth]{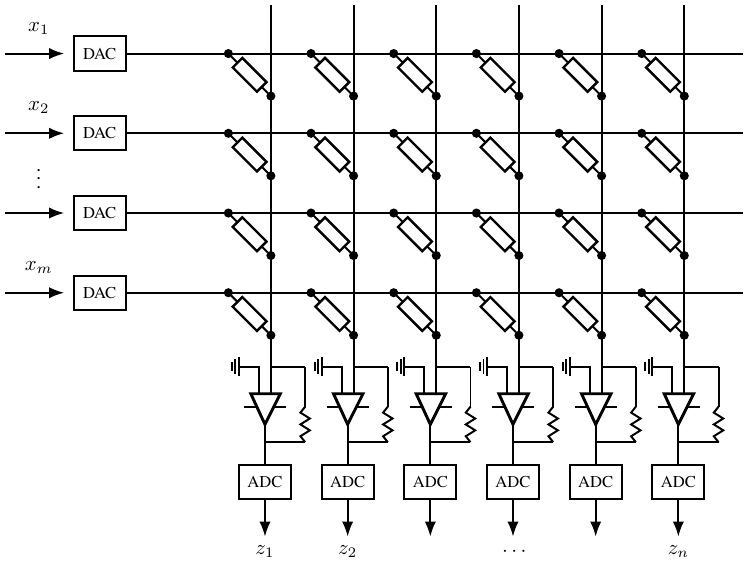}
    \vspace{-15pt}
	\caption{Memristor crossbar architecture.\hspace{30pt}}
	\label{memristor}
    \vspace{-18pt}
\end{figure}

Prior work \cite{dupraz2020noisy,dupraz2021power,kern2024fast} has developed analytical models of error propagation and power–accuracy trade-offs for directly implemented baseline VMM scheme with memristors. 
Our previous work \cite{lu2026low} proposed the factorized low-rank approximation scheme (FLAS).
FLAS improves VMM accuracy through low-rank factorization and stage-wise averaging, but did not consider conductance scaling or power consumption.
Consequently, how to introduce conductance scaling in the FLAS to make the power control possible, and whether FLAS consumes less power than baseline, are unexplored.

In this paper, we extend the FLAS for a memristor model with adjustable column-wise conductance scaling and derive closed-form mean squared error (MSE) and power consumption expressions for both FLAS and baseline VMM.
We then jointly optimize approximation rank, replication allocation, and conductance scaling to minimize power subject to a prescribed MSE tolerance, a budget on the number of memristors, and conductance scaling bounds. 
By varying the MSE tolerance, we identify the power–MSE Pareto frontier.
Numerical results show that FLAS achieves lower power for matrices with rapidly decaying singular values and retains this advantage for moderately decaying spectra when a larger output MSE is permitted. 
A comparison of memristor and TIA power contributions further explains how increasing TIA feedback resistance strengthens FLAS’s relative power advantage when larger output MSEs are permitted, but favors baseline VMM when more accurate outputs are required.

The remaining part of the paper is organized as follows.
Section~\ref{SystemModel} presents the noisy memristor crossbar architecture with column-wise conductance scaling.
Section~\ref{MSE_power_baseline} develops the MSE and power consumption analysis for baseline VMM.
Section~\ref{MSE_power_FLAS} extends FLAS to flexible column-wise conductance scaling and develops the MSE and power analysis.
Section~\ref{tradeoff} analyzes the power--MSE tradeoff and illustrate the tradeoff through  numerical examples.

\vspace{-10pt}
\section{System Model}\label{SystemModel}
\vspace{-5pt}
Memristor crossbars allow for the implementation of
the VMM 
$\mathbf c=\mathbf b \mathbf A$, where
$\mathbf b\in\mathbb R^{1\times m}$,
$\mathbf A\in\mathbb R^{m\times n}$, and
$\mathbf c\in\mathbb R^{1\times n}$.
As illustrated in Fig.~\ref{memristor}, we denote the voltage applied to the row word line $j$ as $x_j$,
and the conductance of the memristor at the intersection of row $j$ and column $k$ of an $m\times n$ memristor crossbar as $g_{j,k}$.
Then the $k$-th column bit line voltage output is
$z_k=r_{\mathrm T}\sum_{j=1}^m x_j g_{j,k}$, where $r_{\mathrm T}$ is the feedback resistance of a TIA.

We first introduce how baseline VMM scheme is implemented.
For simplicity, in this paper we assume $|b_j|_{\max}\leq V_{\mathrm{in}}^{\max}$ \cite{liu2022memristor}, where $V_{\mathrm{in}}^{\max}$ is the allowable voltage amplitude.
Consequently, we set $x_j=b_j$ for $j \in \{1,...,m \}$.
Denote the conductance range in which the memristor operates reliably as $[g_{\min},g_{\max}]$.
To fit the entries of $\mathbf A$ into this range, we apply column-wise
scaling. Let
\begin{align}
a_{j,k}^{\mathrm{sc}} \triangleq \kappa_{\mathrm A,k}a_{j,k}, \label{eq:scaled_weight}
\end{align}
denote the scaled version of $a_{j,k}$, 
where the scaling factor of column $k$ is defined as $\kappa_{\mathrm A, k} \triangleq \frac{g_{u,k}-g_{\min}}{|a_k^{\max}|}$, where $|a_k^{\max}|\triangleq\max_{1\leq j\leq m}|a_{j,k}|$, and $g_{u,k}$ denotes an adjustable upper target conductance satisfying $g_{\min}<g_{u,k}\leq g_{\max}$. We assume that $|a_k^{\max}|$ are nonzero for all $k \in \{1,...,n\}$.
The scaling matrix for $\mathbf{A}$ is defined as
\begin{equation}\label{D_A_def}
\mathbf D_{\mathrm A} \triangleq
    \operatorname{diag}
    (\kappa_{\mathrm A,1},\ldots,\kappa_{\mathrm A,n}).
\end{equation}

Since memristor conductances are nonnegative, each signed
matrix entry is represented using a positive and a negative
crossbar.
Let $s\in\{+,-\}$ index the positive and negative crossbars.
The target conductances are written as
$
g_{j,k}^{(s)}=g_{\min}+[a_{j,k}^{\mathrm{sc}}]_s, s\in\{+,-\},
$
where $[u]_+\triangleq\max(u,0)$ and
$[u]_-\triangleq\max(-u,0)$.
The positive and negative crossbars are driven by the same input voltages, and their bit-line currents are subtracted by a differential readout.
Thus, the common offset $g_{\min}$ cancels under differential
readout. 
The bit line output voltage is measured as $z_k = r_{\mathrm{T}} \sum_{j=1}^{m} x_j g_{j,k}^{\text{eff}}$, where the effective conductance is defined as $g_{j,k}^{\mathrm{eff}} \triangleq
    g_{j,k}^{(+)}-g_{j,k}^{(-)}
     =a_{j,k}^{\mathrm{sc}}$.
Then after rescaling $\hat{c}_k \triangleq \frac{z_k}{r_{\mathrm{T}}\kappa_k}$, we can recover the VMM result $\hat{c}_k $.

Due to device and circuit nonidealities, the programmed
conductances may differ from their target values. 
We model the programmed conductance value on each crossbar as a random variale $G_{j,k}^{(s)}$ based on the additive programming noise model
\cite{le2022precision,liu2018memristor,joshi2020accurate}
\begin{equation}
G_{j,k}^{(s)} =g_{j,k}^{(s)}+N_{j,k}^{(s)},
\quad s\in\{+,-\},
\label{eq:noisy_conductance_general}
\end{equation}
where $N_{j,k}^{(s)}$ denotes the random variable representing the programming noise with zero mean and variance $\tau_v^2$.
In this paper, the noise variables are assumed mutually independent across all memristors and crossbars. 
The entries $N_{j,k}^{\mathrm{eff}}$ form the effective conductance noise matrix $\mathbf N_{\mathrm A}^{\mathrm{eff}}$.
\begin{equation}
N_{j,k}^{\mathrm{eff}}\triangleq
    N_{j,k}^{(+)}-N_{j,k}^{(-)}.
\label{eq:effective_noise}
\end{equation}
Then we have $E[N_{j,k}^{\mathrm{eff}}]=0$, and $\operatorname{Var}(N_{j,k}^{\mathrm{eff}}) =2\tau_v^2$.
Therefore, for the directly implemented baseline VMM with programming noise, the analog output of array $s$ is given by
\begin{equation}
Z_k^{\mathrm{base}(s)} = r_{\mathrm T}\sum_{j=1}^m x_j G_{j,k}^{(s)},
\quad s\in\{+,-\}.
\label{eq:baseline_array_output}
\end{equation}
We define the efficient noise matrix after subtraction of positive and negative array and rescaling as
$\mathbf E_{\mathrm A}\triangleq
\mathbf N_{\mathrm A}^{\mathrm{eff}}
\mathbf D_{\mathrm A}^{-1}$, whose $(j,k)$-th entry is
$E_{j,k}=N_{j,k}^{\mathrm{eff}}/\kappa_{\mathrm A,k}$.
According to the mean and variance of $N_{j,k}^{\mathrm{eff}}$ in \eqref{eq:effective_noise}, we know $\mathbb E[E_{j,k}]=0$, and $\operatorname{Var}(E_{j,k}) = \frac{2\tau_v^2}{\kappa_{\mathrm A,k}^2}$.
The recovered output is
\begin{align}
\hat c_k^{\mathrm{base}}
    &=
    \frac{
    Z_k^{\mathrm{base}(+)}
    -Z_k^{\mathrm{base}(-)}
    }{r_{\mathrm T}\kappa_{\mathrm A,k}}
    =
    \sum_{j=1}^m b_j(a_{j,k}+E_{j,k})
\label{eq:baseline_rescaled_output}.
\end{align}
In matrix form, we have $\hat{\mathbf c}^{\mathrm{base}}= \mathbf b(\mathbf A+\mathbf E_{\mathrm A})$.

According to the physical principle, the instantaneous power dissipated by the memristor on array $s$ is 
\begin{equation}
    P_{j,k}^{\mathrm{mem}(s)} = G_{j,k}^{(s)}x_j^2 , \quad s\in\{+,-\}.
\end{equation}
The signal-dependent feedback-path power of the TIA connected to array $s$ of column $k$ is
\begin{equation}
P_k^{\mathrm{TIA}(s)} = \frac{\left(Z_k^{(s)}\right)^2}{r_{\mathrm T}}, \quad s\in\{+,-\}. 
\end{equation}

\vspace{-10pt}

\vspace{-5pt}
\section{MSE and Power analysis for baseline}\label{MSE_power_baseline}
\vspace{-5pt}
This section provides the MSE and power consumption analysis for baseline VMM.
We use the aggregate mean-squared error (MSE) to evaluate the computation accuracy. 
Throughout this paper, the input vector $\mathbf{b}$ and the matrix $\mathbf{A}$ are deterministic. 
Hence, all expectations are taken only with respect to the memristor programming noise, and are denoted by $\mathbb{E}_{\mathbf{N}}[\cdot]$.
For the power analysis, we assume that the positive and negative crossbars have separate TIAs at each output column. 
The total power consumption consists of the memristors and the TIAs power consumption.

\vspace{-5pt}
\begin{theorem}[Aggregate baseline MSE]
The aggregate MSE of baseline VMM is 
\begin{equation}
\begin{aligned}
\mathcal{E}_{\mathrm{base}}
&\triangleq
\mathbb{E}_{\mathbf{N}}
\left[
\left\|
\widehat{\mathbf{c}}^{\mathrm{base}}
-
\mathbf{c}
\right\|_2^2
\,\middle|\,
\mathbf{b},\mathbf{A}
\right]
=
2\tau_v^2
\|\mathbf{b}\|_2^2
\operatorname{Tr}
\left(
\mathbf{D}_{\mathrm A}^{-2}
\right).
\end{aligned}
\label{eq:baseline_error}
\end{equation}
\end{theorem}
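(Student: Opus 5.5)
The plan is to start from the matrix form of the recovered output derived just before the theorem, $\widehat{\mathbf c}^{\mathrm{base}}=\mathbf b(\mathbf A+\mathbf E_{\mathrm A})$. Since $\mathbf c=\mathbf b\mathbf A$, the error vector is $\widehat{\mathbf c}^{\mathrm{base}}-\mathbf c=\mathbf b\mathbf E_{\mathrm A}$. Its $k$-th entry is $\sum_{j=1}^m b_j E_{j,k}$ with $E_{j,k}=N^{\mathrm{eff}}_{j,k}/\kappa_{\mathrm A,k}$. The aggregate MSE is therefore
\begin{equation*}
\mathcal E_{\mathrm{base}}=\sum_{k=1}^n \mathbb E_{\mathbf N}\Big[\Big(\sum_{j=1}^m b_j E_{j,k}\Big)^2\Big],
\end{equation*}
where $\mathbf b$ and $\mathbf A$ are deterministic, so the conditioning only fixes the scalings $\kappa_{\mathrm A,k}$.

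Next I will expand each square as a double sum over $j,j'$ and use linearity of expectation. The cross terms $j\neq j'$ involve $\mathbb E[N^{\mathrm{eff}}_{j,k}N^{\mathrm{eff}}_{j',k}]$. Each $N^{\mathrm{eff}}_{j,k}=N^{(+)}_{j,k}-N^{(-)}_{j,k}$ depends only on the two memristors at position $(j,k)$. The assumption that all programming noises are mutually independent and zero-mean therefore makes distinct effective noises independent and zero-mean, so these cross terms vanish.

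The diagonal terms then give
\begin{equation*}
\sum_j b_j^2\operatorname{Var}(E_{j,k})=\frac{2\tau_v^2}{\kappa_{\mathrm A,k}^2}\|\mathbf b\|_2^2,
\end{equation*}
using the variance computed earlier for $E_{j,k}$. Summing over $k$ produces $2\tau_v^2\|\mathbf b\|_2^2\sum_k\kappa_{\mathrm A,k}^{-2}$. Because $\mathbf D_{\mathrm A}$ is diagonal by \eqref{D_A_def}, this sum equals $2\tau_v^2\|\mathbf b\|_2^2\operatorname{Tr}(\mathbf D_{\mathrm A}^{-2})$, which is \eqref{eq:baseline_error}.

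There is no real obstacle here. The only step needing care is justifying that the cross terms vanish. This requires checking that the effective noises at different positions $(j,k)\neq(j',k)$ share no underlying $N^{(s)}$ variable, which follows directly from the definition \eqref{eq:effective_noise}.
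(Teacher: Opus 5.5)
Your proposal is correct and follows essentially the same route as the paper: write the error as $\mathbf b\mathbf E_{\mathrm A}$, eliminate the cross terms using independence of the programming noises, and sum the per-column terms $2\tau_v^2\|\mathbf b\|_2^2/\kappa_{\mathrm A,k}^2$ into $\operatorname{Tr}(\mathbf D_{\mathrm A}^{-2})$. Your explicit check that effective noises at distinct positions share no underlying $N^{(s)}$ variable is a slightly more careful version of the paper's one-line independence remark.
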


\begin{proof}
The computation-error vector is
$
\widehat{\mathbf{c}}^{\mathrm{base}}-\mathbf{c}
=
\mathbf{b}\mathbf{E}_{\mathbf{A}}.
$
The error at the $k$-th output is consequently
$
\Delta c_k^{\mathrm{base}}
=
\widehat c_k^{\mathrm{base}}-c_k
=
\sum_{j=1}^{m}b_jE_{j,k}.
$
Because the programming noises associated with different
memristors are mutually independent, the cross terms vanish, and
the MSE of the $k$-th output is
\begin{equation}
\mathbb{E}_{\mathbf{N}}
\left[
\left(\Delta c_k^{\mathrm{base}}\right)^2
\,\middle|\,
\mathbf{b},\mathbf{A}
\right]
= \mathbb{E}_{\mathbf{N}}
\left[
\left(
\sum_{j=1}^{m}b_jE_{j,k}
\right)^2
\right]
=
\frac{2\tau_v^2}{\kappa_{A,k}^2}
\|\mathbf{b}\|_2^2.
\label{eq:baseline_per_column_mse}
\end{equation}
The aggregate output MSE of the baseline VMM is defined as 
\begin{equation}
\begin{aligned}
\mathcal{E}_{\mathrm{base}}
&\triangleq
\mathbb{E}_{\mathbf{N}}
\left[
\left\|
\widehat{\mathbf{c}}^{\mathrm{base}}
-
\mathbf{c}
\right\|_2^2
\,\middle|\,
\mathbf{b},\mathbf{A}
\right]
=
\sum_{k=1}^{n}
\mathbb{E}_{\mathbf{N}}
\left[
\left(\Delta c_k^{\mathrm{base}}\right)^2
\,\middle|\,
\mathbf{b},\mathbf{A}
\right]
\end{aligned}
\label{eq:baseline_error_def}
\end{equation}
Substituting \eqref{eq:baseline_per_column_mse} into \eqref{eq:baseline_error_def} gives the expression in \eqref{eq:baseline_error}.
\end{proof}

\begin{proposition}[Memristor power of baseline]
\label{prop:baseline_memristor_power}
\vspace{-5pt}
The mean power dissipated by all memristors in the two crossbars is
\begin{equation}
\begin{aligned}
\overline{P}^{\mathrm{mem}}_{\mathrm{base}}
&\triangleq
\sum_{j=1}^{m}\sum_{k=1}^{n}
\mathbb{E}_{\mathbf{N}}
\left[
x_j^2
\left(
g_{j,k}^{(+)}+g_{j,k}^{(-)} \right)
\,\middle|\,
\mathbf{x},\mathbf{A}
\right]\\
&=
2ng_{\min}\|\mathbf{x}\|_2^2
+
(\mathbf{x}^{\odot 2})^{\mathsf T}
\mathbf{A}_{\text{abs}}\mathbf{D}_{\mathrm A}\mathbf{1}_n,
\end{aligned}
\label{eq:total_memristor_power}
\end{equation}
where $\odot$ denotes the Hadamard product,
defined by $(\mathbf{u}\odot\mathbf{v})_j=u_jv_j$
for $\mathbf{u},\mathbf{v}\in\mathbb{R}^m$.
We define
$\mathbf{x}^{\odot 2}=\mathbf{x}\odot\mathbf{x}
=[x_1^2,\ldots,x_m^2]^{\mathsf T}$.
The matrix $\mathbf{A}_{\text{abs}}$ denotes the element-wise
absolute value of $\mathbf{A}$, and
$\mathbf{1}_n\in\mathbb{R}^n$ is the all-ones column vector.
\end{proposition}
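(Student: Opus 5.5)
The plan is to start from the definition of the total memristor power as the sum of the instantaneous powers $P_{j,k}^{\mathrm{mem}(s)}=G_{j,k}^{(s)}x_j^2$ over all $j,k$ and both crossbars $s\in\{+,-\}$. First I will take the expectation with respect to the programming noise. Since $x_j$ is deterministic, linearity of expectation gives $\mathbb{E}_{\mathbf N}[G_{j,k}^{(s)}x_j^2]=x_j^2\,\mathbb{E}_{\mathbf N}[g_{j,k}^{(s)}+N_{j,k}^{(s)}]=x_j^2 g_{j,k}^{(s)}$ by \eqref{eq:noisy_conductance_general} and the zero-mean assumption. This justifies the first line of \eqref{eq:total_memristor_power}: the noise drops out entirely, and the expectation there is trivially the deterministic sum.

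Next I will substitute the target conductances $g_{j,k}^{(s)}=g_{\min}+[a^{\mathrm{sc}}_{j,k}]_s$. The key elementary identity is $[u]_+ + [u]_- = |u|$, so
\[
g_{j,k}^{(+)}+g_{j,k}^{(-)}=2g_{\min}+|a^{\mathrm{sc}}_{j,k}|=2g_{\min}+\kappa_{\mathrm A,k}|a_{j,k}|,
\]
where the last step uses \eqref{eq:scaled_weight} and $\kappa_{\mathrm A,k}>0$. The positivity of $\kappa_{\mathrm A,k}$ holds because $g_{u,k}>g_{\min}$ and $|a_k^{\max}|\neq 0$.

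The sum then splits into two parts. The constant part gives $\sum_{j}\sum_{k}2g_{\min}x_j^2=2ng_{\min}\|\mathbf x\|_2^2$. The remaining part is $\sum_j\sum_k x_j^2|a_{j,k}|\kappa_{\mathrm A,k}$. I will recognize it as a bilinear form: $(\mathbf A_{\mathrm{abs}}\mathbf D_{\mathrm A})_{j,k}=|a_{j,k}|\kappa_{\mathrm A,k}$, since right-multiplying by the diagonal \eqref{D_A_def} scales the columns. Multiplying by $\mathbf 1_n$ on the right sums over $k$, and multiplying by $(\mathbf x^{\odot 2})^{\mathsf T}$ on the left weights row $j$ by $x_j^2$ and sums over $j$. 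This yields exactly $(\mathbf{x}^{\odot 2})^{\mathsf T}\mathbf{A}_{\mathrm{abs}}\mathbf{D}_{\mathrm A}\mathbf{1}_n$.

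There is no real obstacle here. The only points requiring care are:
\begin{itemize}
\item the $[u]_++[u]_-=|u|$ step, including the sign of $\kappa_{\mathrm A,k}$;
\item the bookkeeping of the matrix-form identification (left versus right diagonal scaling).
\end{itemize}
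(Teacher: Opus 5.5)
Your proposal is correct and follows the paper's proof essentially step for step. You use the zero-mean noise to reduce the expectation to $x_j^2 g_{j,k}^{(s)}$, then the identity $[u]_++[u]_-=|u|$ together with $\kappa_{\mathrm A,k}>0$ to get $x_j^2(2g_{\min}+\kappa_{\mathrm A,k}|a_{j,k}|)$ per pair, and summing over $j,k$ gives the stated matrix form; your explicit check that $\mathbf{A}_{\text{abs}}\mathbf{D}_{\mathrm A}$ scales columns makes the paper's terse final summation step fully transparent.
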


\begin{proof}
The instantaneous power dissipated by the memristor on array
$s\in\{+,-\}$ is
$ P_{j,k}^{\mathrm{mem}(s)} = G_{j,k}^{(s)}x_j^2 $.
Because $x_j$ is deterministic and the programming noise is
zero-mean,
\begin{equation}
\begin{aligned}
\mathbb{E}_{\mathbf{N}}
\left[
P_{j,k}^{\mathrm{mem}(s)}
\,\middle|\,
\mathbf{x},\mathbf{A}
\right]
&=
x_j^2
\mathbb{E}_{\mathbf{N}}
\left[
G_{j,k}^{(s)}
\,\middle|\,
\mathbf{A}
\right]
=
x_j^2g_{j,k}^{(s)}.
\end{aligned}
\label{eq:single_memristor_mean_power}
\end{equation}
The total mean power dissipated by the positive--negative memristor pair at position $(j,k)$ is
$P_{j,k}^{\mathrm{mem}}
=
P_{j,k}^{\mathrm{mem}(+)}
+
P_{j,k}^{\mathrm{mem}(-)}$.
Therefore,
\begin{equation}
\begin{aligned}
\mathbb{E}_{\mathbf{N}}
\left[
P_{j,k}^{\mathrm{mem}}
\,\middle|\,
\mathbf{x},\mathbf{A}
\right]
&=
x_j^2
\left(
g_{j,k}^{(+)}+g_{j,k}^{(-)}
\right)\\
&=
x_j^2
\left(
2g_{\min}
+
[a_{j,k}^{\mathrm{sc}}]_+
+
[a_{j,k}^{\mathrm{sc}}]_-
\right)\\
&=
x_j^2
\left(
2g_{\min}
+
\kappa_{A,k}|a_{j,k}|
\right),
\end{aligned}
\label{eq:memristor_pair_power}
\end{equation}
where $[a]_++[a]_- = |a|$ and $\kappa_{A,k}>0$ have been used.
Summing \eqref{eq:memristor_pair_power} over all rows and
columns gives
\begin{equation}
P^{\mathrm{mem}}\triangleq
\sum_{j=1}^{m}\sum_{k=1}^{n}
\mathbb{E}_{\mathbf{N}}
\left[
P_{j,k}^{\mathrm{mem}}
\,\middle|\,
\mathbf{x},\mathbf{A}
\right]
\end{equation}
which proves \eqref{eq:total_memristor_power}.
\end{proof}

\vspace{-10pt}
\begin{proposition}[TIA power of baseline]
\label{prop:baseline_tia_power}
The power consumption of all $2n$ TIAs can be
written directly in terms of $\mathbf{A}$ and $\mathbf{A}_{\text{abs}}$ as
\begin{equation}
\begin{aligned}
\overline{P}^{\mathrm{TIA}}_{\mathrm{base}}
&\triangleq \sum_{k=1}^n\mathbb{E}_{\mathbf{N}}
\left[
P_k^{\mathrm{TIA}(+)}
+
P_k^{\mathrm{TIA}(-)}
\middle|
\mathbf{x},\mathbf{A}
\right] 
\\
&{}=
2 r_T \left\|
g_{\min}S_x\mathbf{1}_n^{\mathsf T}
+
\frac{1}{2}
\mathbf{x}
\mathbf{A}_{\text{abs}}
\mathbf{D}_{\mathrm A}
\right\|_2^2 \\
&+
\frac{r_T \left\|
\mathbf{x}
\mathbf{A}
\mathbf{D}_{\mathrm A}
\right\|_2^2}{2}
+
2n r_T \tau_v^2
\|\mathbf{x}\|_2^2 ,
\end{aligned}
\label{eq:total_tia_power_matrix}
\end{equation}
where $ S_x \triangleq \sum_{j=1}^{m}x_j  = \mathbf{1}_m^{\mathsf T}\mathbf{x} $.
\end{proposition}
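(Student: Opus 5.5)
We compute each TIA's expected power with a bias--variance decomposition, then pair the positive and negative arrays of each column so that the result can be written in terms of $\mathbf A$ and $\mathbf A_{\text{abs}}$.

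\textbf{Step 1: single TIA.} For array $s$ and column $k$, write $Z_k^{(s)}=r_{\mathrm T}\sum_j x_j g_{j,k}^{(s)}+r_{\mathrm T}\sum_j x_j N_{j,k}^{(s)}$. The noise is zero-mean and independent across memristors, so
\begin{equation*}
\mathbb{E}_{\mathbf N}\!\left[\big(Z_k^{(s)}\big)^2\right]
= r_{\mathrm T}^2\,\big(\mu_k^{(s)}\big)^2 + r_{\mathrm T}^2\tau_v^2\|\mathbf x\|_2^2,
\qquad
\mu_k^{(s)}\triangleq \sum_j x_j g_{j,k}^{(s)} .
\end{equation*}
Dividing by $r_{\mathrm T}$ gives
\begin{equation*}
\mathbb{E}_{\mathbf N}\big[P_k^{\mathrm{TIA}(s)}\big]=r_{\mathrm T}\big(\mu_k^{(s)}\big)^2+r_{\mathrm T}\tau_v^2\|\mathbf x\|_2^2 .
\end{equation*}
Summing over the $2n$ TIAs produces the noise term $2nr_{\mathrm T}\tau_v^2\|\mathbf x\|_2^2$ immediately.

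\textbf{Step 2: split the two arrays into common and differential parts.} For $u\in\mathbb R$ we have $[u]_\pm=\tfrac12(|u|\pm u)$. Hence
\begin{equation*}
\mu_k^{(\pm)}=g_{\min}S_x+\tfrac12\,\kappa_{\mathrm A,k}\sum_j x_j|a_{j,k}|\pm\tfrac12\,\kappa_{\mathrm A,k}\sum_j x_j a_{j,k}\;\triangleq\;\alpha_k\pm\beta_k .
\end{equation*}
Next apply the parallelogram identity $(\alpha+\beta)^2+(\alpha-\beta)^2=2\alpha^2+2\beta^2$. This gives
\begin{equation*}
\big(\mu_k^{(+)}\big)^2+\big(\mu_k^{(-)}\big)^2=2\alpha_k^2+2\beta_k^2 .
\end{equation*}

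\textbf{Step 3: assemble in matrix form.} Treat $\mathbf x$ as a row vector. Then $\alpha_k$ is the $k$-th entry of $g_{\min}S_x\mathbf 1_n^{\mathsf T}+\tfrac12\mathbf x\mathbf A_{\text{abs}}\mathbf D_{\mathrm A}$, and $\beta_k$ is the $k$-th entry of $\tfrac12\mathbf x\mathbf A\mathbf D_{\mathrm A}$. Summing over $k$ and multiplying by $r_{\mathrm T}$ gives the first term $2r_{\mathrm T}\|\cdot\|_2^2$. The differential part gives $2r_{\mathrm T}\cdot\tfrac14\|\mathbf x\mathbf A\mathbf D_{\mathrm A}\|_2^2=\tfrac{r_{\mathrm T}}{2}\|\mathbf x\mathbf A\mathbf D_{\mathrm A}\|_2^2$. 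Adding the noise term from Step 1 yields \eqref{eq:total_tia_power_matrix}.

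The main obstacle is Step 2. One has to recognize that the $\pm$ parts are best handled through $[u]_\pm=\tfrac12(|u|\pm u)$ together with the parallelogram identity, rather than expanding the squares directly. One also has to treat $\mathbf x$ consistently as a row vector, matching $\mathbf b$, so that the matrix products are well-formed.
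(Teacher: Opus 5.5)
Your proposal is correct and follows essentially the same route as the paper's proof. Both compute the per-TIA bias--variance decomposition $r_{\mathrm T}[(\mu_k^{(s)})^2+\tau_v^2\|\mathbf x\|_2^2]$, split $\mu_k^{(\pm)}$ into common and differential parts via $[u]_\pm=\tfrac12(|u|\pm u)$ (implicitly using $\kappa_{\mathrm A,k}>0$), apply $(u+v)^2+(u-v)^2=2u^2+2v^2$, and then assemble the result in matrix form.
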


\begin{proof}
The signal-dependent feedback-path power of the TIA connected to array $s\in\{+,-\}$ of column $k$ is
$
P_k^{\mathrm{TIA}(s)} = \frac{\left(Z_k^{(s)}\right)^2}{r_T}. 
$
For compactness, define
$ S_x \triangleq \sum_{j=1}^{m}x_j  = \mathbf{1}_m^{\mathsf T}\mathbf{x}, $
$ Q_x \triangleq \sum_{j=1}^{m}x_j^2 = \|\mathbf{x}\|_2^2. $
The deterministic target current component and the random programming noise component are defined respectively as
\begin{equation}\label{mu_k_s_eta_k_s}
\mu_k^{(s)}
\triangleq
\sum_{j=1}^{m}x_jg_{j,k}^{(s)},
\qquad
\eta_k^{(s)}
\triangleq
\sum_{j=1}^{m}x_jN_{j,k}^{(s)}.
\end{equation}
Then substituting \eqref{eq:baseline_array_output}, \eqref{mu_k_s_eta_k_s}, and \eqref{eq:noisy_conductance_general} into original power expression, we can write
$
Z_k^{(s)}
=
r_T
\left(
\mu_k^{(s)}+\eta_k^{(s)}
\right).
$
Since $N_{j,k}^{(s)}$ are independent and identically distributed (i.i.d.) random variables with mean 0 and variance $\tau_v^2$, we have $\mathbb{E}_{\mathbf{N}} \left[\eta_k^{(s)}\right] = 0 $,
$
\operatorname{Var}_{\mathbf{N}}
\left(\eta_k^{(s)}\right)=
\sum_{j=1}^{m}
x_j^2
\operatorname{Var}_{\mathbf{N}}
\left(N_{j,k}^{(s)}\right)=
\tau_v^2
\sum_{j=1}^{m}x_j^2 = \tau_v^2Q_x.
$
Then the mean power of the $k$-th column of the TIA of array $s$ is 
\begin{equation}
\begin{aligned}
&\mathbb{E}_{\mathbf{N}}
\left[
P_k^{\mathrm{TIA}(s)}
\,\middle|\,
\mathbf{x},\mathbf{A}
\right]
= \frac{1}{r_T} \mathbb{E}_{\mathbf{N}} \left[ \left(Z_k^{(s)}\right)^2 \,\middle|\, \mathbf{x},\mathbf{A} \right] \\
&=
r_T \left[ \left(\mu_k^{(s)}\right)^2 + \tau_v^2Q_x
\right].
\end{aligned}
\label{eq:single_tia_mean_power}
\end{equation}
which follows from $\mathbb{E}[Y^2]=\operatorname{Var}(Y)+(\mathbb{E}[Y])^2$.
Substituting \eqref{eq:scaled_weight}, \eqref{eq:scaling_map} into \eqref{mu_k_s_eta_k_s}, and using $[a]_+=(|a|+a)/2$ and $[a]_-=(|a|-a)/2$, the deterministic target current components are 
\begin{equation}
\mu_k^{(\pm)}
= g_{\min}S_x +\frac{\kappa_{A,k}}{2} 
\left(
\sum_{j=1}^{m}x_j|a_{j,k}| \pm \sum_{j=1}^{m}x_ja_{j,k} \right).
\label{eq:target_currents_pm}
\end{equation}
Therefore, the mean power of TIA pair of the $k$-th column is 
\begin{equation}\label{eq:tia_pair_power_proof}
\begin{aligned}
&\mathbb{E}_{\mathbf{N}}
\left[
P_k^{\mathrm{TIA}}
\middle|
\mathbf{x},\mathbf{A}
\right]
= \mathbb{E}_{\mathbf{N}}
\left[
P_k^{\mathrm{TIA}(+)}
+
P_k^{\mathrm{TIA}(-)}
\middle|
\mathbf{x},\mathbf{A}
\right] \\
&=r_T \left[ \left(\mu_k^{(+)}\right)^2 +\left(\mu_k^{(-)}\right)^2
+ 2\tau_v^2Q_x \right]. 
\end{aligned}
\end{equation}
By substituting \eqref{eq:target_currents_pm} into \eqref{eq:tia_pair_power_proof}, then defining $u_k = g_{\min}S_x
+ \frac{\kappa_{A,k}}{2} \left( \sum_{j=1}^{m}x_j|a_{j,k}| \right)$, $v_k = \frac{\kappa_{A,k}}{2} \left( \sum_{j=1}^{m}x_ja_{j,k} \right)$, and using $(u_k+v_k)^2+(u_k-v_k)^2=2u_k^2+2v_k^2$,
the mean power of the TIA pair associated with column $k$ is
\begin{align}
&\mathbb{E}_{\mathbf{N}}
\left[
P_k^{\mathrm{TIA}}
 \middle|
\mathbf{x},\mathbf{A}
\right]
=
r_T\Bigg[
2\left(
g_{\min}S_x
+
\frac{\kappa_{A,k}}{2}
\sum_{j=1}^{m}x_j|a_{j,k}|
\right)^2 \notag \\ 
&+ \frac{\kappa_{A,k}^2}{2}
\left( \sum_{j=1}^{m}x_ja_{j,k} \right)^2 + 2\tau_v^2Q_x
\Bigg]. \label{eq:tia_pair_power}
\end{align}

Finally, the mean power of all TIAs is expressed as
\begin{equation}\label{P_TIA_baseline_original}
\begin{aligned}
P^{\mathrm{TIA}}
&= \sum_{k=1}^n \mathbb{E}_{\mathbf{N}}
\left[
P_k^{\mathrm{TIA}}
\,\middle|\,
\mathbf{x},\mathbf{A}
\right] =\sum_{k=1}^n  r_T
\left[ 2u_k^2 +  2v_k^2 + 2\tau_v^2Q_x
\right]
\end{aligned}
\end{equation}
Since the $k$-th entry of vector $\mathbf{x} \mathbf{A} \mathbf{D}_{A}$ is 
$
\kappa_{A,k}
\sum_{j=1}^{m}x_ja_{j,k},
$
and $k$-th entry of vector $\mathbf{x} |\mathbf{A}|\mathbf{D}_{A}$ is 
$
= \kappa_{A,k}
\sum_{j=1}^{m}x_j|a_{j,k}|,
$ 
by substituting \eqref{eq:tia_pair_power} into \eqref{P_TIA_baseline_original}, we obtain the matrix form expression in \eqref{eq:total_tia_power_matrix}.
\end{proof}

\vspace{-10pt}
\begin{theorem}[Total power consumption of baseline]
\label{thm:baseline_total_power}
The total power of the baseline VMM is 
\begin{equation}
\begin{aligned}
\overline{P}^{\mathrm{total}}_{\mathrm{base}}
&\triangleq
\overline{P}^{\mathrm{mem}}_{\mathrm{base}}
+
\overline{P}^{\mathrm{TIA}}_{\mathrm{base}},
\end{aligned}
\label{eq:baseline_total_power_closed_form}
\end{equation}
where $\overline{P}^{\mathrm{mem}}_{\mathrm{base}}$ and $\overline{P}^{\mathrm{TIA}}_{\mathrm{base}}$ are expressed as \eqref{eq:total_memristor_power} and
\eqref{eq:total_tia_power_matrix}, respectively.
\end{theorem}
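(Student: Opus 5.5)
The statement of Theorem~\ref{thm:baseline_total_power} is essentially a definition combined with the two preceding results, so the plan is to assemble it from Proposition~\ref{prop:baseline_memristor_power} and Proposition~\ref{prop:baseline_tia_power} via linearity of expectation. Concretely, I will first define the instantaneous total power of the baseline scheme as the sum of all memristor dissipations and all TIA feedback-path dissipations,
\begin{equation*}
P^{\mathrm{total}}_{\mathrm{base}}=\sum_{j=1}^{m}\sum_{k=1}^{n}\left(P_{j,k}^{\mathrm{mem}(+)}+P_{j,k}^{\mathrm{mem}(-)}\right)+\sum_{k=1}^{n}\left(P_k^{\mathrm{TIA}(+)}+P_k^{\mathrm{TIA}(-)}\right),
\end{equation*}
and take its conditional expectation $\mathbb{E}_{\mathbf N}[\cdot\mid\mathbf x,\mathbf A]$. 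Since this is a finite sum of random variables, each with finite mean (the noise has finite variance $\tau_v^2$), linearity splits the expectation into the memristor part and the TIA part.

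Next, I will identify each part with the earlier results. The memristor part is, by \eqref{eq:single_memristor_mean_power} and \eqref{eq:memristor_pair_power}, exactly $\overline{P}^{\mathrm{mem}}_{\mathrm{base}}$. Its closed form \eqref{eq:total_memristor_power} follows by summing $x_j^2(2g_{\min}+\kappa_{\mathrm A,k}|a_{j,k}|)$ over $j,k$. The TIA part is by definition $\overline{P}^{\mathrm{TIA}}_{\mathrm{base}}$. Its closed form \eqref{eq:total_tia_power_matrix} is Proposition~\ref{prop:baseline_tia_power}, which rests on $\mathbb E[Y^2]=\operatorname{Var}(Y)+(\mathbb E Y)^2$ applied to $Z_k^{(s)}=r_{\mathrm T}(\mu_k^{(s)}+\eta_k^{(s)})$. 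Adding the two gives \eqref{eq:baseline_total_power_closed_form}.

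The only point needing care is consistency of the noise realizations. The memristor and TIA powers are correlated, because both depend on the same $N_{j,k}^{(s)}$. Linearity of expectation does not require independence, however, so no cross terms arise and the correlation is harmless. I will also check that the TIA expression uses the same column scalings $\kappa_{\mathrm A,k}$ and the same target conductances $g_{j,k}^{(s)}$ as the memristor expression, so the two formulas can legitimately be added. I expect no real obstacle beyond this bookkeeping.
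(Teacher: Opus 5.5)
Your proposal is correct and follows the paper's route. The paper's proof simply combines Propositions~\ref{prop:baseline_memristor_power} and~\ref{prop:baseline_tia_power}, and your explicit appeal to linearity of expectation makes that combination precise, including the point that the correlation between memristor and TIA powers through the shared noise is harmless.
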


\begin{proof}
Combining Propositions 1 and 2 gives the baseline total mean power.
\end{proof}

\vspace{-5pt}

\vspace{-10pt}
\section{MSE and Power analysis for extended FLAS}\label{MSE_power_FLAS}
\label{sec:flas_analysis}
\vspace{-7pt}
Our previous work \cite{lu2026low} proposes the factorized low-rank approximation scheme (FLAS) and evaluate the MSE gain without taking the scaling into account.
In this paper, we further extend the FLAS to support adjustable conductance scaling and derives the MSE and power consumption expressions of it.

\vspace{-8pt}
\subsection{Extended FLAS of VMM}
\vspace{-5pt}
Let $\mathbf A=\mathbf U\boldsymbol\Sigma\mathbf V^{\mathsf T}$
have rank $r$, where $\mathbf U\in\mathbb R^{m\times m}$ and
$\mathbf V\in\mathbb R^{n\times n}$ are orthogonal, and
$\boldsymbol\Sigma\in\mathbb R^{m\times n}$ has singular
values $\sigma_1\ge\cdots\ge\sigma_r\ge0$.
For an integer $1\le k\le r$, let $\mathbf U_k$ and $\mathbf V_k$
contain the first $k$ columns of $\mathbf U$ and $\mathbf V$,
respectively, and let
$\boldsymbol\Sigma_k=\operatorname{diag}(\sigma_1,\ldots,\sigma_k)$.
Define the rank-$k$ approximation $\mathbf A_k=\mathbf L\mathbf R$
with
\begin{equation}
\mathbf L=\mathbf U_k\boldsymbol\Sigma_k^{1/2}
    \in\mathbb R^{m\times k},
\qquad
\mathbf R=\boldsymbol\Sigma_k^{1/2}\mathbf V_k^{\mathsf T}
    \in\mathbb R^{k\times n}.
\end{equation}
The corresponding target output is
$\mathbf c_k=\mathbf b\mathbf A_k=\mathbf b\mathbf L\mathbf R$.

We now implement the memristor-based computation in two successive
VMMs, first with $\mathbf L$ and then with $\mathbf R$.
Their positive and negative crossbar pairs require $2mk$ and $2kn$
memristors per computation, respectively.
The lower memristor requirement enables multiple independently programmed replicas under the same hardware budget, whose averaged outputs suppress programming errors.
This mitigates the impact of random programming noise to the final MSE.
The two VMM stages are described below.

1) Stage 1:
Following Section~\ref{SystemModel}, we map $\mathbf{L}$ to an $m\times k$ crossbar
pair with input $\mathbf x_L=\mathbf b$ and scaling matrix
$
\mathbf D_{\mathrm L}=\operatorname{diag}(\kappa_{\mathrm L,1},\ldots,\kappa_{\mathrm  L,k}).
$
For $\ell \in \{1,...,k\}$, set $\kappa_{\mathrm L,\ell} \triangleq\frac{g_{u,\ell}-g_{\min}}{|l_{\ell}^{\max}|}$,  where
$|l_\ell^{\max}|\triangleq
\max_{1\le p\le m}|(\mathbf L)_{p,\ell}|$
and $g_{u,\ell}\in(g_{\min},g_{\max}]$ is the Stage~1 upper
target conductance.

For a positive integer $t_{\mathrm{L}}$,
replica $i\in\{1,\ldots,t_{\mathrm{L}}\}$ computes
$\widehat{\mathbf c}_{\mathrm{L}}^{(i)}
=\mathbf x_{\mathrm{L}}(\mathbf L+\mathbf E_{\mathrm{L}}^{(i)})$,
where
$\mathbf E_{\mathrm{L}}^{(i)}
\triangleq
\mathbf N_{\mathrm{L}}^{\mathrm{eff},(i)}
\mathbf D_{\mathrm{L}}^{-1}$
and
$\mathbf N_{\mathrm{L}}^{\mathrm{eff},(i)}
\in\mathbb R^{m\times k}$
is the effective noise matrix defined by \eqref{eq:effective_noise} for that replica.
All underlying programming noises are mutually independent across
memristors, crossbars, replicas, and stages.
Thus, $\mathbf E_{\mathrm{L}}^{(i)}$ has zero-mean entries with
$\operatorname{Var}((\mathbf E_{\mathrm{L}}^{(i)})_{p,\ell})
=2\tau_v^2/\kappa_{\mathrm{L},\ell}^2$, $p\in\{1,\ldots,m\}$.
Averaging the replica outputs gives the final output of Stage 1:
\begin{equation}
\widehat{\mathbf c}_{\mathrm{L}}
=\frac{1}{t_{\mathrm{L}}}
 \sum_{i=1}^{t_{\mathrm{L}}}
 \widehat{\mathbf c}_{\mathrm{L}}^{(i)}
=\mathbf x_{\mathrm{L}}
 (\mathbf L+\overline{\mathbf E}_{\mathrm{L}}),
\end{equation}
where
$\overline{\mathbf E}_{\mathrm{L}}
\triangleq
t_{\mathrm{L}}^{-1}
\sum_{i=1}^{t_{\mathrm{L}}}\mathbf E_{\mathrm{L}}^{(i)}$.

2) Stage 2:
Map $\mathbf R$ to a $k\times n$ crossbar pair with input
$\mathbf x_{\mathrm{R}}
=\widehat{\mathbf c}_{\mathrm{L}}\in\mathbb R^{1\times k}$
and scaling matrix
$\mathbf D_{\mathrm{R}}
=\operatorname{diag}(\kappa_{\mathrm{R},1},\ldots,\kappa_{\mathrm{R},n})$.
For $q \in \{1,...,n\}$, set
$\kappa_{\mathrm{R},q}
=(g_{u,q}-g_{\min})/|r_q^{\max}|$,
where
$|r_q^{\max}|\triangleq
\max_{1\le\ell\le k}|(\mathbf R)_{\ell,q}|$
is assumed nonzero.
The Stage~2 upper targets $g_{u,q}\in(g_{\min},g_{\max}]$
are chosen separately.

For a positive integer $t_{\mathrm{R}}$,
replica $j\in\{1,\ldots,t_{\mathrm{R}}\}$ produces
$\widehat{\mathbf c}_{\mathrm{R}}^{(j)}
=\mathbf x_{\mathrm{R}}(\mathbf R+\mathbf E_{\mathrm{R}}^{(j)})$,
where
$\mathbf E_{\mathrm{R}}^{(j)}
\triangleq
\mathbf N_{\mathrm{R}}^{\mathrm{eff},(j)}
\mathbf D_{\mathrm{R}}^{-1}$
and
$\mathbf N_{\mathrm{R}}^{\mathrm{eff},(j)}
\in\mathbb R^{k\times n}$
is defined by \eqref{eq:effective_noise} for that replica.
The entries of $\mathbf E_{\mathrm{R}}^{(j)}$ also have zero mean
and variance
$\operatorname{Var}((\mathbf E_{\mathrm{R}}^{(j)})_{\ell,q})
=2\tau_v^2/\kappa_{\mathrm{R},q}^2$, $\ell=1,\ldots,k$.
Averaging $\widehat{\mathbf c}_{\mathrm{R}}^{(j)}$ yields the final of output of FLAS
\begin{equation}
\widehat{\mathbf c}^{\mathrm{FLAS}}
=\frac{1}{t_{\mathrm{R}}}
 \sum_{j=1}^{t_{\mathrm{R}}}
 \widehat{\mathbf c}_{\mathrm{R}}^{(j)}
=\mathbf x_{\mathrm{L}}
 (\mathbf L+\overline{\mathbf E}_{\mathrm{L}})
 (\mathbf R+\overline{\mathbf E}_{\mathrm{R}}),
\end{equation}
where
$\overline{\mathbf E}_{\mathrm{R}}
\triangleq
t_{\mathrm{R}}^{-1}
\sum_{j=1}^{t_{\mathrm{R}}}\mathbf E_{\mathrm{R}}^{(j)}$.

Assume that the budget of memristors is $2mn$, then the replica count $t_{\mathrm L}$ and $t_{\mathrm R}$ should satisfy the following constraint
\begin{equation}
2t_{\mathrm L}mk+2t_{\mathrm R}kn \le 2mn.
\label{eq:flas_device_budget}
\end{equation}

\subsection{Second-stage input statistics}

Let $\boldsymbol{\mu}=\mathbf{x}\mathbf{L}$ and
$\boldsymbol{\xi}=\mathbf{x}\overline{\mathbf{E}}_L$, both
in $\mathbb{R}^{1\times k}$.
Then $\mathbf{x}_R=\boldsymbol{\mu}+\boldsymbol{\xi}$,
$\mathbb{E}_N[\boldsymbol{\xi}]=\mathbf{0}$.
Define the noise covariance and input second-moment matrices,
respectively, as
\begin{equation}
\begin{aligned}
\mathbf{C}_L
&\triangleq\mathbb{E}_N[\boldsymbol{\xi}^{T}\boldsymbol{\xi}]
=\frac{2\tau_v^2Q_x}{t_L}\mathbf{D}_L^{-2},\\
\boldsymbol{\Psi}_R
&\triangleq\mathbb{E}_N[\mathbf{x}_R^{T}\mathbf{x}_R]
=\boldsymbol{\mu}^{T}\boldsymbol{\mu}+\mathbf{C}_L.
\end{aligned}
\label{eq:flas_input_moments}
\end{equation}
where $\mathbf{C}_L, \boldsymbol{\Psi}_R \in \mathbb{R}^{k\times k}$.
For $\ell=1,\ldots,k$, let
\begin{equation}
\begin{aligned}
(\boldsymbol{\Psi}_R)_{\ell,\ell} \triangleq \nu_\ell
=\mathbb{E}_N[x_{R,\ell}^{2}]
 =\mu_\ell^2+\frac{2\tau_v^2Q_x}{t_L\kappa_{L,\ell}^2}.
\end{aligned}
\label{eq:flas_input_element_moments}
\end{equation}
Collect these moments in the row vector
$\boldsymbol{\nu}=(\nu_1,\ldots,\nu_k)$.
Define the expected squared norm of the second-stage input by
\begin{equation}
\begin{aligned}
Q_R&\triangleq\mathbb{E}_N[\|\mathbf{x}_R\|_2^2]=\sum_{\ell=1}^{k}\nu_\ell
=\operatorname{Tr}(\boldsymbol{\Psi}_R)\\
&=\|\mathbf{x}\mathbf{L}\|_2^2
+\frac{2\tau_v^2Q_x}{t_L}\operatorname{Tr}(\mathbf{D}_L^{-2}).
\end{aligned}
\label{eq:flas_second_stage_Q}
\end{equation}
Thus $Q_R$ is deterministic and includes the contribution
from first-stage programming noise.
These statistics require only the specified independence and first two noise moments.

\subsection{MSE of FLAS}

\begin{theorem}[Aggregate FLAS MSE]
\label{thm:flas_mse}
The aggregate MSE of FLAS is
\begin{align}\label{eq:flas_mse}
&\mathcal{E}_{\mathrm{FLAS}}
\triangleq \mathbb{E}_N\!\left[
\|\widehat{\mathbf{c}}^{\mathrm{FLAS}}-\mathbf{c}\|_2^2
\right]  \\
&=\|\mathbf{x}(\mathbf{A}-\mathbf{A}_k)\|_2^2
+\frac{2\tau_v^2Q_x}{t_L}
\operatorname{Tr}(\mathbf{D}_L^{-2}\boldsymbol{\Sigma}_k)
+\frac{2\tau_v^2Q_R}{t_R}
\operatorname{Tr}(\mathbf{D}_R^{-2}), \notag
\end{align}
where $ Q_{\mathrm L} \triangleq \sum_{j=1}^{m}x_{\mathrm L, j}^2 = \|\mathbf{x}_{\mathrm L}\|_2^2 $,
and $Q_{\mathrm R}$ is the expected squared norm of the second-stage input, i.e.,
\begin{equation}
\begin{aligned}
Q_{\mathrm R} \triangleq\mathbb{E}_N[\|\mathbf{x}_{\mathrm R}\|_2^2]
=\|\mathbf{x}_{\mathrm L}\mathbf{L}\|_2^2
+\frac{2\tau_v^2Q_{\mathrm L}}{t_{\mathrm L}}\operatorname{Tr}(\mathbf{D}_{\mathrm L}^{-2}).
\end{aligned}
\label{eq:flas_second_stage_Q}
\end{equation}
\end{theorem}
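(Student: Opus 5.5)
The plan is to write the FLAS error as a sum of three terms and show that their cross terms vanish in expectation. Write $\mathbf{x}=\mathbf{x}_{\mathrm L}$ and $\boldsymbol{\xi}=\mathbf{x}\overline{\mathbf E}_{\mathrm L}$. Then $\mathbf{x}_{\mathrm R}=\mathbf{x}\mathbf L+\boldsymbol{\xi}$ and $\mathbf c=\mathbf x\mathbf A$. Expanding the final expression for $\widehat{\mathbf c}^{\mathrm{FLAS}}$ and using $\mathbf L\mathbf R=\mathbf A_k$ gives
\begin{equation*}
\widehat{\mathbf c}^{\mathrm{FLAS}}-\mathbf c
=-\mathbf x(\mathbf A-\mathbf A_k)+\boldsymbol{\xi}\mathbf R+\mathbf x_{\mathrm R}\overline{\mathbf E}_{\mathrm R}.
\end{equation*}
The first term is deterministic. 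The second term depends only on the Stage~1 noise. The third term couples the random Stage~2 input with the Stage~2 noise.

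For the cross terms, I would use that $\mathbb E_N[\boldsymbol\xi]=\mathbf 0$. This kills the cross term between the deterministic term and $\boldsymbol\xi\mathbf R$. For any cross term involving $\mathbf x_{\mathrm R}\overline{\mathbf E}_{\mathrm R}$, I would condition on the Stage~1 noise. Given that noise, $\mathbf x_{\mathrm R}$ is fixed, while $\overline{\mathbf E}_{\mathrm R}$ is independent of it and zero-mean. The conditional expectation of such a cross term is therefore zero, and by the tower property so is its unconditional expectation. The first term contributes $\|\mathbf x(\mathbf A-\mathbf A_k)\|_2^2$ directly.

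For the Stage~1 term, I would write $\mathbb E_N\|\boldsymbol\xi\mathbf R\|_2^2=\operatorname{Tr}(\mathbf R^{\mathsf T}\mathbf C_{\mathrm L}\mathbf R)=\operatorname{Tr}(\mathbf C_{\mathrm L}\mathbf R\mathbf R^{\mathsf T})$, with $\mathbf C_{\mathrm L}$ from \eqref{eq:flas_input_moments}. The fact that $\mathbf C_{\mathrm L}$ is diagonal holds because the replica-averaged noise entries in different columns are independent and have variance $2\tau_v^2/(t_{\mathrm L}\kappa_{\mathrm L,\ell}^2)$. Since $\mathbf V_k^{\mathsf T}\mathbf V_k=\mathbf I_k$, we get $\mathbf R\mathbf R^{\mathsf T}=\boldsymbol\Sigma_k^{1/2}\mathbf V_k^{\mathsf T}\mathbf V_k\boldsymbol\Sigma_k^{1/2}=\boldsymbol\Sigma_k$. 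This yields the term $\frac{2\tau_v^2Q_{\mathrm L}}{t_{\mathrm L}}\operatorname{Tr}(\mathbf D_{\mathrm L}^{-2}\boldsymbol\Sigma_k)$.

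For the Stage~2 term, I would again condition on the Stage~1 noise. The $q$-th entry of $\mathbf x_{\mathrm R}\overline{\mathbf E}_{\mathrm R}$ is $\sum_\ell x_{\mathrm R,\ell}(\overline{\mathbf E}_{\mathrm R})_{\ell,q}$, a sum of independent zero-mean terms. Its conditional second moment is $\|\mathbf x_{\mathrm R}\|_2^2\cdot 2\tau_v^2/(t_{\mathrm R}\kappa_{\mathrm R,q}^2)$, by the same computation as in the baseline MSE theorem with the variance divided by $t_{\mathrm R}$ because of averaging. Summing over $q$ and then taking expectation over the Stage~1 noise replaces $\|\mathbf x_{\mathrm R}\|_2^2$ by $Q_{\mathrm R}$ from \eqref{eq:flas_second_stage_Q}, which gives the last term.

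The main obstacle is this last step. The Stage~2 input is random and not independent of $\boldsymbol\xi$, so the argument must be ordered carefully: condition first, use the independence of the Stage~2 noise from everything in Stage~1, and only then average. Once that is done, adding the three contributions gives \eqref{eq:flas_mse}.
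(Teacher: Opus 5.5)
Your proposal is correct and follows essentially the same route as the paper: the same three-term error decomposition, cross terms vanishing via $\mathbb{E}_N[\boldsymbol{\xi}]=\mathbf{0}$ and the zero-mean Stage~2 noise that is independent of Stage~1, the identity $\mathbf{R}\mathbf{R}^{\mathsf T}=\boldsymbol{\Sigma}_k$, and the same second-stage variance computation, which the paper writes in matrix form as $\mathbb{E}_N[\overline{\mathbf{E}}_{\mathrm R}\overline{\mathbf{E}}_{\mathrm R}^{\mathsf T}]=\frac{2\tau_v^2}{t_{\mathrm R}}\operatorname{Tr}(\mathbf{D}_{\mathrm R}^{-2})\mathbf{I}_k$. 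Conditioning explicitly on the Stage~1 noise and then applying the tower property is a more careful version of the paper's one-line appeal to independence.
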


\begin{proof}
The output error decomposes as
\begin{equation}
\begin{aligned}
\widehat{\mathbf{c}}^{\mathrm{FLAS}}-\mathbf{c}
&=\mathbf{x}(\mathbf{A}_k-\mathbf{A})+\boldsymbol{\xi}\mathbf{R} +\mathbf{x}_R\overline{\mathbf{E}}_R.
\end{aligned}
\label{eq:flas_error_decomposition}
\end{equation}
Because $\mathbb{E}_N[\boldsymbol{\xi}]=\mathbf{0}$ and
$\overline{\mathbf{E}}_R$ is zero-mean and independent of
the first-stage noises, all cross terms vanish in expectation.
The second-stage noise statistics give
\begin{equation}
\mathbb{E}_N\!\left[
\overline{\mathbf{E}}_R\overline{\mathbf{E}}_R^{T}
\right]
=\frac{2\tau_v^2}{t_R}
\operatorname{Tr}(\mathbf{D}_R^{-2})\mathbf{I}_k.
\label{eq:flas_stage2_noise_moment}
\end{equation}
Using independence,
$\mathbb{E}_N[\boldsymbol{\xi}^{T}\boldsymbol{\xi}]
=\mathbf{C}_L$, and
$\mathbb{E}_N[\|\mathbf{x}_R\|_2^2]=Q_R$, we obtain
\begin{equation}
\begin{aligned}
\mathcal{E}_{\mathrm{FLAS}}(\mathbf{x})
&=\|\mathbf{x}(\mathbf{A}-\mathbf{A}_k)\|_2^2
+\operatorname{Tr}
(\mathbf{C}_L\mathbf{R}\mathbf{R}^{T})
+\frac{2\tau_v^2Q_R}{t_R}
\operatorname{Tr}(\mathbf{D}_R^{-2}).
\end{aligned}
\label{eq:flas_mse_moments}
\end{equation}
According to the SVD factorization, we have
\begin{equation}
\mathbf{R}\mathbf{R}^{T}
=\boldsymbol{\Sigma}_k^{1/2}
\mathbf{V}_k^{T}\mathbf{V}_k
\boldsymbol{\Sigma}_k^{1/2}
=\boldsymbol{\Sigma}_k.
\label{eq:flas_R_gram}
\end{equation}
Substituting this identity and
$\mathbf{C}_L=(2\tau_v^2Q_x/t_L)\mathbf{D}_L^{-2}$
into~\eqref{eq:flas_mse_moments} proves~\eqref{eq:flas_mse}.
The interaction between the two stages' programming noises
is included through $Q_R$.
\end{proof}

\vspace{-5pt}
\subsection{Power consumption analysis of FLAS}
\vspace{-5pt}
As in Section~\ref{MSE_power_baseline}, the total power consumption  memristor power consumption and TIA power consumption. 

\begin{proposition}[First-stage memristor power]
\label{prop:flas_L_mem}
The mean power of all first-stage memristors is
\begin{equation}
\overline{P}_{\mathrm L}^{\mathrm{mem}}
=t_{\mathrm L}\left[2kg_{\min}Q_{\mathrm L}
+\mathbf{x}^{\odot2}\mathbf{L}_{\text{abs}}\mathbf{D}_{\mathrm L}\mathbf{1}_k
\right].
\label{eq:flas_L_mem}
\end{equation}
where $\mathbf{L}_{\text{abs}}$ denotes the element-wise absolute value of $\mathbf{L}$, and $\mathbf{1}_k\in\mathbb{R}^k$ is the all-ones column vector.
\end{proposition}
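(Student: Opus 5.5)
The plan is to reduce the first-stage computation to the baseline memristor-power calculation of Proposition~\ref{prop:baseline_memristor_power}, applied once per replica, and then sum over the $t_{\mathrm L}$ replicas.

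\emph{Single replica.} Fix a replica $i\in\{1,\ldots,t_{\mathrm L}\}$. It is an $m\times k$ positive/negative crossbar pair storing $\mathbf L$ with scaling $\mathbf D_{\mathrm L}$ and driven by the deterministic input $\mathbf x_{\mathrm L}=\mathbf b$. So it is exactly the baseline configuration of Section~\ref{SystemModel}, with the substitutions $\mathbf A\to\mathbf L$, $n\to k$, $\mathbf D_{\mathrm A}\to\mathbf D_{\mathrm L}$, and $\mathbf x\to\mathbf x_{\mathrm L}$. For each pair $(p,\ell)$ I will take the expectation over the programming noise, which is zero-mean and multiplied by the deterministic factor $x_p^2$. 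This gives
\[
\mathbb E_{\mathbf N}\!\left[x_p^2\big(G^{(+)}_{p,\ell}+G^{(-)}_{p,\ell}\big)\right]
=x_p^2\big(2g_{\min}+\kappa_{\mathrm L,\ell}|L_{p,\ell}|\big),
\]
using $[a]_++[a]_-=|a|$ and $\kappa_{\mathrm L,\ell}>0$. Summing over $p$ and $\ell$ reproduces \eqref{eq:total_memristor_power} in the form $2kg_{\min}Q_{\mathrm L}+(\mathbf x^{\odot2})^{\mathsf T}\mathbf L_{\text{abs}}\mathbf D_{\mathrm L}\mathbf 1_k$. I will state this by invoking Proposition~\ref{prop:baseline_memristor_power} directly rather than re-deriving it.

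\emph{Summing over replicas.} Every replica uses the same target conductances $g^{(s)}_{p,\ell}$, because they share $\mathbf L$ and $\mathbf D_{\mathrm L}$, and every replica receives the same input $\mathbf x_{\mathrm L}$. Each replica therefore has the same mean power. Linearity of expectation then gives the total as $t_{\mathrm L}$ times the single-replica value. Independence of the noise across replicas is not needed here, only the zero mean of each noise term.

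I expect no real obstacle. The one point that needs care is confirming that the first-stage input is deterministic, which is why the result involves $Q_{\mathrm L}=\|\mathbf x_{\mathrm L}\|_2^2$ and not a second moment like $Q_{\mathrm R}$. This is in contrast to the second stage, where the input carries noise from the first stage. A minor bookkeeping check is to read $\mathbf x^{\odot2}$ as a row or column vector consistently, so that the product $\mathbf x^{\odot2}\mathbf L_{\text{abs}}\mathbf D_{\mathrm L}\mathbf 1_k$ is a scalar.
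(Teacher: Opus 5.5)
Your proposal is correct and matches the paper's proof: apply Proposition~\ref{prop:baseline_memristor_power} to a single first-stage replica (with $\mathbf A\to\mathbf L$, $n\to k$, $\mathbf D_{\mathrm A}\to\mathbf D_{\mathrm L}$, deterministic input $\mathbf x_{\mathrm L}=\mathbf b$), then multiply by $t_{\mathrm L}$ using linearity of expectation. You are also right that only the zero mean of the programming noise is needed here, not independence across replicas.
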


\begin{proof}
Apply Proposition~1 to one first-stage replica and sum over
the $t_L$ replicas.
\end{proof}

\begin{proposition}[First-stage TIA power]
\label{prop:flas_L_TIA}
The mean power of all $2kt_{\mathrm L}$ first-stage TIAs is
\begin{equation}
\begin{aligned}
\overline{P}_{\mathrm L}^{\mathrm{TIA}}
&=2t_{\mathrm L}r_T\|\mathbf{x}_{\mathrm L}\mathbf{H}_{\mathrm L}\|_2^2
+\frac{t_{\mathrm L}r_T}{2}
\|\mathbf{x}_{\mathrm L}\mathbf{L}\mathbf{D}_{\mathrm L}\|_2^2
+2t_{\mathrm L}kr_T\tau_v^2Q_{\mathrm L}.
\end{aligned}
\label{eq:flas_L_TIA}
\end{equation}
where $ \mathbf{H}_{\mathrm L}=g_{\min}\mathbf{I}_{\mathrm L}
+\frac{1}{2}\mathbf{L}_{\text{abs}}\mathbf{D}_{\mathrm L}$.
The $\mathbf{I}_{\mathrm R} \in \mathbb R^{m\times k}$ denotes an all-ones matrix.
\end{proposition}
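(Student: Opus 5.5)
The plan is to reduce the statement to Proposition~\ref{prop:baseline_tia_power}. The first-stage computation is, replica by replica, exactly a baseline VMM: the crossbar pair stores $\mathbf L\in\mathbb R^{m\times k}$, the input is the deterministic vector $\mathbf x_{\mathrm L}=\mathbf b$, and the column scaling is $\mathbf D_{\mathrm L}$. It has $k$ output columns, each with a TIA on the positive and on the negative array. So I will first apply Proposition~\ref{prop:baseline_tia_power} to a single replica $i$, substituting $\mathbf A\to\mathbf L$, $n\to k$, $\mathbf D_{\mathrm A}\to\mathbf D_{\mathrm L}$ and $\mathbf x\to\mathbf x_{\mathrm L}$. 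This gives the mean TIA power of one replica:
\begin{equation*}
2r_T\Big\|g_{\min}S_x\mathbf 1_k^{\mathsf T}+\tfrac12\mathbf x_{\mathrm L}\mathbf L_{\text{abs}}\mathbf D_{\mathrm L}\Big\|_2^2+\frac{r_T}{2}\|\mathbf x_{\mathrm L}\mathbf L\mathbf D_{\mathrm L}\|_2^2+2kr_T\tau_v^2Q_{\mathrm L}.
\end{equation*}
This application is legitimate because the proof of Proposition~\ref{prop:baseline_tia_power} uses only three facts: the input is deterministic, the noises within one crossbar pair are i.i.d. with zero mean and variance $\tau_v^2$, and the target conductances are $g_{\min}+[\cdot]_\pm$. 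All three hold for each first-stage replica.

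Next I will rewrite the first term in the compact form involving $\mathbf H_{\mathrm L}$. Let $\mathbf I_{\mathrm L}$ denote the $m\times k$ all-ones matrix; the statement's text calls it $\mathbf I_{\mathrm R}$, which I read as a typo. Then $\mathbf x_{\mathrm L}\mathbf I_{\mathrm L}=S_x\mathbf 1_k^{\mathsf T}$, since each entry is $\sum_j x_j$. Hence
\begin{equation*}
g_{\min}S_x\mathbf 1_k^{\mathsf T}+\tfrac12\mathbf x_{\mathrm L}\mathbf L_{\text{abs}}\mathbf D_{\mathrm L}=\mathbf x_{\mathrm L}\big(g_{\min}\mathbf I_{\mathrm L}+\tfrac12\mathbf L_{\text{abs}}\mathbf D_{\mathrm L}\big)=\mathbf x_{\mathrm L}\mathbf H_{\mathrm L},
\end{equation*}
and the first term becomes $2r_T\|\mathbf x_{\mathrm L}\mathbf H_{\mathrm L}\|_2^2$.

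Finally I will sum over the $t_{\mathrm L}$ replicas. Every replica has the same deterministic input $\mathbf x_{\mathrm L}$, the same targets and the same noise statistics. Expectation is linear, so the total is $t_{\mathrm L}$ times the per-replica expression, which is exactly the claimed formula. The count $2kt_{\mathrm L}$ of TIAs follows from having $k$ columns, two arrays per column, and $t_{\mathrm L}$ replicas.

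There is no real obstacle here. The only point to check is that averaging happens after the TIAs. This means the per-TIA power depends only on its own replica's noise, so independence across replicas is not even needed; linearity of expectation suffices.
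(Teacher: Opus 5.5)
Your proposal is correct and follows the paper's own proof: apply Proposition~\ref{prop:baseline_tia_power} to one first-stage replica with $\mathbf L$, $\mathbf D_{\mathrm L}$ and $k$ columns, then multiply by $t_{\mathrm L}$, since each replica dissipates power before the outputs are averaged. Your explicit identity $\mathbf x_{\mathrm L}\mathbf I_{\mathrm L}=S_x\mathbf 1_k^{\mathsf T}$ fills in a step the paper leaves implicit, and you are right to read $\mathbf I_{\mathrm R}$ as a typo for the $m\times k$ all-ones matrix $\mathbf I_{\mathrm L}$.
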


\begin{proof}
Apply Proposition~2 to $\mathbf{L}$ and $\mathbf{D}_L$, and
multiply by $t_L$. Each physical replica dissipates power
before output averaging.
\end{proof}

\begin{proposition}[Second-stage memristor power]
\label{prop:flas_R_mem}
The mean power of all second-stage memristors is
\begin{equation}
\overline{P}_{\mathrm R}^{\mathrm{mem}}
=t_{\mathrm R}\left[2ng_{\min}Q_{\mathrm R}
+\boldsymbol{\nu}\mathbf{R}_{\text{abs}}\mathbf{D}_{\mathrm R}\mathbf{1}_n
\right].
\label{eq:flas_R_mem}
\end{equation}
where $\mathbf{R}_{\text{abs}}$ denotes the element-wise absolute value of $\mathbf{R}$.
The $\ell$-th entry of the row vector $\boldsymbol{\nu}$ is
$\nu_\ell \triangleq \mathbb{E}_N[x_{\mathrm{R},\ell}^{2}]$.
\end{proposition}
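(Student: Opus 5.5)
The plan is to repeat the argument of Proposition~\ref{prop:baseline_memristor_power} for a single second-stage replica, with one change: the input $\mathbf x_{\mathrm R}$ is now random, because it depends on the first-stage noise. Then I sum over the $t_{\mathrm R}$ replicas.

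Fix a replica $j$ and a position $(\ell,q)$ with sign $s\in\{+,-\}$. The instantaneous power is $x_{\mathrm R,\ell}^2 G_{\ell,q}^{(s),(j)}$, where
\[
G_{\ell,q}^{(s),(j)} = g_{\ell,q}^{(s)} + N_{\ell,q}^{(s),(j)},
\qquad
g_{\ell,q}^{(s)} = g_{\min} + [\kappa_{\mathrm R,q} r_{\ell,q}]_s .
\]
The key observation is that $x_{\mathrm R,\ell} = \mu_\ell + \xi_\ell$ depends only on the first-stage noises. These are independent of the second-stage programming noise $N_{\ell,q}^{(s),(j)}$ by the stated mutual independence across stages. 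The expectation therefore factorizes:
\[
\mathbb E_N\!\left[x_{\mathrm R,\ell}^2 G_{\ell,q}^{(s),(j)}\right]
= \mathbb E_N\!\left[x_{\mathrm R,\ell}^2\right]\,
\mathbb E_N\!\left[G_{\ell,q}^{(s),(j)}\right]
= \nu_\ell\, g_{\ell,q}^{(s)},
\]
using zero-mean programming noise and the definition of $\nu_\ell$ in \eqref{eq:flas_input_element_moments}. Justifying this factorization is the only step not already covered by Proposition~\ref{prop:baseline_memristor_power}. It is where the deterministic factor $x_j^2$ of the baseline is replaced by the second moment $\nu_\ell$, so I will state the independence explicitly.

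Next I add the positive and negative crossbars. Using $[a]_+ + [a]_- = |a|$ and $\kappa_{\mathrm R,q}>0$, exactly as in \eqref{eq:memristor_pair_power}, the pair contributes $\nu_\ell\bigl(2g_{\min} + \kappa_{\mathrm R,q}|r_{\ell,q}|\bigr)$.

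Summing over $q=1,\ldots,n$ and $\ell=1,\ldots,k$:
\begin{itemize}
\item the first term gives $2n g_{\min}\sum_\ell \nu_\ell = 2n g_{\min} Q_{\mathrm R}$, by \eqref{eq:flas_second_stage_Q};
\item the second term gives $\sum_{\ell,q}\nu_\ell |r_{\ell,q}|\kappa_{\mathrm R,q} = \boldsymbol\nu\,\mathbf R_{\text{abs}}\mathbf D_{\mathrm R}\mathbf 1_n$.
\end{itemize}
All replicas share the same target conductances and the same input $\mathbf x_{\mathrm R}$, so each contributes identically. Multiplying by $t_{\mathrm R}$ yields \eqref{eq:flas_R_mem}.
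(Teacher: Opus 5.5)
Your proof is correct and follows the paper's own argument. The paper conditions on $\mathbf{x}_{\mathrm R}$, which is independent of the second-stage noise, and applies Proposition~\ref{prop:baseline_memristor_power}; this is the same as your factorization $\mathbb{E}_N[x_{\mathrm R,\ell}^2 G_{\ell,q}^{(s),(j)}]=\nu_\ell\, g_{\ell,q}^{(s)}$, followed by the same summation using $\sum_\ell \nu_\ell = Q_{\mathrm R}$ and multiplication by $t_{\mathrm R}$, with the independence and entrywise steps written out more explicitly than in the paper.
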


\begin{proof}
The second-stage programming noises are independent of
$\mathbf{x}_R$. Condition on $\mathbf{x}_R$, apply
Proposition~1, and use
$\mathbb{E}_N[x_{R,\ell}^{2}]=\nu_\ell$ and
$\mathbb{E}_N[\|\mathbf{x}_R\|_2^2]=Q_R$.
\end{proof}

\begin{proposition}[Second-stage TIA power]
\label{prop:flas_R_TIA}
The mean power of all $2nt_{\mathrm R}$ second-stage TIAs is
\begin{equation}
\begin{aligned}
\overline{P}_{\mathrm R}^{\mathrm{TIA}}
&=2t_{\mathrm R}r_T\operatorname{Tr}
(\mathbf{H}_{\mathrm R}^{\mathsf T}\boldsymbol{\Psi}_{\mathrm R}\mathbf{H}_{\mathrm R})\\
&\quad+\frac{t_{\mathrm R}r_T}{2}\operatorname{Tr}
(\mathbf{D}_{\mathrm R}\mathbf{R}^{\mathsf T}\boldsymbol{\Psi}_{\mathrm R}
\mathbf{R}\mathbf{D}_{\mathrm R})
+2t_{\mathrm R}nr_T\tau_v^2Q_{\mathrm R}.
\end{aligned}
\label{eq:flas_R_TIA}
\end{equation}
where
$\mathbf H_{\mathrm{R}}
=g_{\min}\mathbf I_{\mathrm{R}}
+\frac{1}{2}\mathbf R_{\mathrm{abs}}\mathbf D_{\mathrm{R}}$,
$\mathbf I_{\mathrm{R}}\in\mathbb R^{k\times n}$
is the all-ones matrix, and
$\boldsymbol{\Psi}_{\mathrm{R}}
\triangleq\mathbb{E}_N[
\mathbf x_{\mathrm{R}}^{\mathsf T}\mathbf x_{\mathrm{R}}]$
is the second-moment matrix of $\mathbf x_{\mathrm{R}}$.
\end{proposition}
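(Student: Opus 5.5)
The plan is to condition on the second-stage input $\mathbf{x}_{\mathrm R}$, apply the per-column computation from the proof of Proposition~\ref{prop:baseline_tia_power} to one second-stage replica, and then average over the first-stage noise. Since the second-stage programming noises are independent of $\mathbf{x}_{\mathrm R}$, which depends only on the first-stage noises, the tower property gives $\mathbb{E}_N[\cdot]=\mathbb{E}_{\mathbf{x}_{\mathrm R}}\big[\mathbb{E}[\cdot\mid \mathbf{x}_{\mathrm R}]\big]$.

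First, fix one replica $j$ and one column $q$. Conditioned on $\mathbf{x}_{\mathrm R}$, the calculation leading to \eqref{eq:tia_pair_power} holds verbatim with $\mathbf{x}$ replaced by $\mathbf{x}_{\mathrm R}$, $a_{j,k}$ by $R_{\ell,q}$, and $\kappa_{\mathrm A,k}$ by $\kappa_{\mathrm R,q}$. The conditional mean power of the TIA pair is therefore
\[
r_T\Big[2u_q^2+2v_q^2+2\tau_v^2\|\mathbf{x}_{\mathrm R}\|_2^2\Big],
\]
where
\[
u_q=\sum_{\ell}x_{\mathrm R,\ell}\Big(g_{\min}+\tfrac{\kappa_{\mathrm R,q}}{2}|R_{\ell,q}|\Big)=(\mathbf{x}_{\mathrm R}\mathbf{H}_{\mathrm R})_q,
\qquad
v_q=\tfrac12(\mathbf{x}_{\mathrm R}\mathbf{R}\mathbf{D}_{\mathrm R})_q .
\]
The only step needing care is recognising $u_q$ as a column of $\mathbf{x}_{\mathrm R}\mathbf{H}_{\mathrm R}$. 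This works because the all-ones matrix $\mathbf{I}_{\mathrm R}$ turns $g_{\min}S_{x}$ into $g_{\min}\mathbf{x}_{\mathrm R}\mathbf{I}_{\mathrm R}$.

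Next, sum over $q$ to get the conditional per-replica power
\[
2r_T\|\mathbf{x}_{\mathrm R}\mathbf{H}_{\mathrm R}\|_2^2+\tfrac{r_T}{2}\|\mathbf{x}_{\mathrm R}\mathbf{R}\mathbf{D}_{\mathrm R}\|_2^2+2nr_T\tau_v^2\|\mathbf{x}_{\mathrm R}\|_2^2 .
\]
Then take the expectation over the first-stage noise. The quadratic forms are handled by writing $\|\mathbf{x}_{\mathrm R}\mathbf{M}\|_2^2=\operatorname{Tr}(\mathbf{M}^{\mathsf T}\mathbf{x}_{\mathrm R}^{\mathsf T}\mathbf{x}_{\mathrm R}\mathbf{M})$ and using linearity of trace and expectation. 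This gives $\operatorname{Tr}(\mathbf{M}^{\mathsf T}\boldsymbol{\Psi}_{\mathrm R}\mathbf{M})$ for $\mathbf{M}=\mathbf{H}_{\mathrm R}$ and for $\mathbf{M}=\mathbf{R}\mathbf{D}_{\mathrm R}$; the latter equals $\operatorname{Tr}(\mathbf{D}_{\mathrm R}\mathbf{R}^{\mathsf T}\boldsymbol{\Psi}_{\mathrm R}\mathbf{R}\mathbf{D}_{\mathrm R})$ because $\mathbf{D}_{\mathrm R}$ is symmetric. The noise term becomes $2nr_T\tau_v^2Q_{\mathrm R}$ by \eqref{eq:flas_second_stage_Q}.

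Finally, the $t_{\mathrm R}$ replicas have identically distributed noise and the same input $\mathbf{x}_{\mathrm R}$, and each dissipates power before averaging. Their expected powers are therefore equal, and multiplying by $t_{\mathrm R}$ yields \eqref{eq:flas_R_TIA}.

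I expect no serious obstacle. The one point to justify carefully is the conditioning step. The conditional expectation $\mathbb{E}[\eta_q^{(s)}\mid\mathbf{x}_{\mathrm R}]$ vanishes, which kills the cross term between the deterministic target current and the stage-2 noise. After that, only second moments of $\mathbf{x}_{\mathrm R}$ appear, and these are exactly captured by $\boldsymbol{\Psi}_{\mathrm R}$ from \eqref{eq:flas_input_moments}. Replacing it with $\boldsymbol{\mu}^{\mathsf T}\boldsymbol{\mu}$ alone would be wrong, since that drops the first-stage noise covariance $\mathbf{C}_{\mathrm L}$.
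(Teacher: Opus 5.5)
Your proposal is correct and takes essentially the same route as the paper. Both condition on $\mathbf{x}_{\mathrm R}$, using the independence of the stage-2 noise from the first-stage noise, to reduce each replica to the baseline TIA computation. Both then apply $\mathbb{E}_N[\|\mathbf{x}_{\mathrm R}\mathbf{M}\|_2^2]=\operatorname{Tr}(\mathbf{M}^{\mathsf T}\boldsymbol{\Psi}_{\mathrm R}\mathbf{M})$ to the quadratic terms and multiply by $t_{\mathrm R}$. You also spell out that $u_q$ and $v_q$ are the entries of $\mathbf{x}_{\mathrm R}\mathbf{H}_{\mathrm R}$ and $\tfrac12\mathbf{x}_{\mathrm R}\mathbf{R}\mathbf{D}_{\mathrm R}$, which the paper leaves implicit.
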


\begin{proof}
Conditioning on $\mathbf{x}_R$ reduces each replica to
Proposition~2. Take expectations using
$\mathbb{E}_N[\|\mathbf{x}_R\mathbf{B}\|_2^2]
=\operatorname{Tr}(\mathbf{B}^{T}\boldsymbol{\Psi}_R\mathbf{B})$,
where $\mathbf{B} = 1/2 \mathbf{R} \mathbf{D}_R$,
and sum over the $t_R$ replicas.
\end{proof}

\begin{theorem}[Aggregate power consumption of FLAS]
\label{thm:flas_power}
The total power consumption of FLAS is
\begin{equation}
\begin{aligned}
\overline{P}_{\mathrm{FLAS}}^{\mathrm{total}}
&=\overline{P}_{\mathrm L}^{\mathrm{mem}}
+\overline{P}_{\mathrm L}^{\mathrm{TIA}}
+\overline{P}_{\mathrm R}^{\mathrm{mem}}
+\overline{P}_{\mathrm R}^{\mathrm{TIA}},
\end{aligned}
\label{eq:flas_total_power}
\end{equation}
where the four terms are given by Propositions~\ref{prop:flas_L_mem}--\ref{prop:flas_R_TIA}.
\end{theorem}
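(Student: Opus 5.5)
The plan is to mirror the proof of Theorem~\ref{thm:baseline_total_power}. The only physical dissipating elements in FLAS are the memristors and the TIAs of the two stages. We account for each group exactly once and then invoke linearity of expectation.

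We will first write the instantaneous total power of FLAS as a finite sum. It contains the memristor powers $P^{\mathrm{mem}(s)}_{p,\ell}$ over all $t_{\mathrm L}$ Stage-1 replicas and both crossbars $s\in\{+,-\}$. It also contains the $2kt_{\mathrm L}$ Stage-1 TIA powers. The same two kinds of terms appear for Stage~2, with $t_{\mathrm R}$ replicas and $2nt_{\mathrm R}$ TIAs. We then group the summands into the four blocks (Stage-1 memristors, Stage-1 TIAs, Stage-2 memristors, Stage-2 TIAs). Each summand is a nonnegative random variable with finite mean: it is at most quadratic in the noise, and the noise has finite variance. Hence $\mathbb E_N$ distributes over the finite sum, and the mean total power equals the sum of the four block means.

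Next we identify each block mean with the corresponding earlier result. The Stage-1 blocks are handled by Propositions~\ref{prop:flas_L_mem} and~\ref{prop:flas_L_TIA}, which apply Propositions~\ref{prop:baseline_memristor_power} and~\ref{prop:baseline_tia_power} replica-wise with deterministic input $\mathbf x_{\mathrm L}$. For the Stage-2 blocks, Propositions~\ref{prop:flas_R_mem} and~\ref{prop:flas_R_TIA} condition on $\mathbf x_{\mathrm R}$ and then average using $\boldsymbol\nu$, $\boldsymbol\Psi_{\mathrm R}$ and $Q_{\mathrm R}$. This relies on the Stage-2 noise being independent of the Stage-1 noise, which is assumed in the model, and on the tower property $\mathbb E_N[\cdot]=\mathbb E_N[\mathbb E_N[\cdot\mid\mathbf x_{\mathrm R}]]$.

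The only point needing care is the Stage-2 blocks. Their inputs are random, and they are correlated across replicas and across TIAs because all of them share the same $\mathbf x_{\mathrm R}$. However, the total power is a plain sum rather than a square of a sum, so these correlations never produce cross terms. Linearity of expectation therefore suffices, and we need no independence among the four blocks. With that observation, adding the four expressions yields \eqref{eq:flas_total_power}.
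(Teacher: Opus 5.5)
Your proposal is correct and follows the paper's proof, which simply sums the four component means from Propositions~\ref{prop:flas_L_mem}--\ref{prop:flas_R_TIA}; your explicit appeal to linearity of expectation, and your remark that the shared $\mathbf x_{\mathrm R}$ creates no cross terms in a plain sum, just make that one-line argument precise. One small correction to your integrability justification: under the additive noise model the memristor power $G x^2$ need not be nonnegative, and the Stage-2 TIA power is quartic in the noise overall (quadratic in each stage's noise). Its mean is nevertheless finite because the two stages' noises are independent with finite second moments, so the conclusion stands.
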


\begin{proof}
Sum the four component mean powers given by Propositions~\ref{prop:flas_L_mem}--\ref{prop:flas_R_TIA}.
\end{proof}

\vspace{-5pt}
\section{MSE and Power Consumption tradeoff}
\label{tradeoff}
\vspace{-2pt}
\subsection{Optimization Problem}
\vspace{-2pt}
We optimize power consumption with MSE constraint for the baseline and FLAS in this section.
We respectively denote vector forms of the scaling coefficients for $\mathbf A, \mathbf L, \mathbf R$, as $\boldsymbol{\kappa}_{\mathrm A}=(\kappa_{\mathrm A,1},...\kappa_{\mathrm A,n})$, $\boldsymbol{\kappa}_{\mathrm L}=(\kappa_{\mathrm L,1},...\kappa_{\mathrm L,k})$, and $\boldsymbol{\kappa}_{\mathrm R}=(\kappa_{\mathrm R,1},...\kappa_{\mathrm R,n})$.
For fixed $\mathbf A$ and vector $\mathbf b$,
we minimize the mean total power subject to an aggregate
MSE tolerance $\epsilon>0$.
Varying $\epsilon$ characterizes the power--MSE trade-off, since tightening the accuracy constraint shrinks the feasible set and cannot decrease the optimal power.

The baseline optimization problem is formulated as
\vspace{-5pt}
\begin{equation}
\begin{aligned}
\text{(P1)}\quad
&\min_{\boldsymbol{\kappa}_{\mathrm A}}\quad \overline{P}^{\mathrm{total}}_{\mathrm{base}}\\
&\text{s.t.}\quad \mathcal{E}_{\mathrm{base}}\le\epsilon,\\
&\phantom{\text{s.t.}\quad}
0<\kappa_{\mathrm A,q}\le\overline{\kappa}_{\mathrm A,q},
\quad q=1,\ldots,n.
\end{aligned}
\label{prob:P1}
\end{equation}
Here $\overline{P}^{\mathrm{total}}_{\mathrm{base}}$ and $\mathcal{E}_{\mathrm{base}}$ are given by Theorems~2 and~1, respectively. 
The conductance constraint in Section~2 gives the scaling upper bound
$\overline{\kappa}_{\mathrm{A},q}
\triangleq(g_{\max}-g_{\min})/|a_q^{\max}|$.

(P1) is convex because its objective is convex quadratic and
its aggregated MSE is a nonnegative weighted sum of
$\kappa_{\mathrm{A},q}^{-2}$.
It can therefore be solved globally whenever feasible.

Let $\mathcal{K}$ be the set of ranks $\{1,\ldots,r\}$. 
Jointly select the rank,
replication counts, and scaling factors, we can formulate the FLAS optimization problem as
\vspace{-5pt}
\begin{equation}
\begin{aligned}
\text{(P2)}\quad
&\min_{\substack{k,t_{\mathrm L},t_{\mathrm R},\\
\mathbf{\kappa}_{\mathrm L},\mathbf{\kappa}_{\mathrm R}}}
\quad \overline{P}_{\mathrm{FLAS}}^{\mathrm{total}}\\
&\text{s.t.}\quad \mathcal{E}_{\mathrm{FLAS}}\le\epsilon,\\
&\phantom{\text{s.t.}\quad} k(mt_L+nt_R)\le mn,\\
&\phantom{\text{s.t.}\quad}
k\in\mathcal K,\qquad
t_{\mathrm{L}},t_{\mathrm{R}}\in\{1,2,\ldots\},\\
&\phantom{\text{s.t.}\quad}
0<\kappa_{L,\ell}\le\overline{\kappa}_{L,\ell},
\quad \ell=1,\ldots,k,\\
&\phantom{\text{s.t.}\quad}
0<\kappa_{R,q}\le\overline{\kappa}_{R,q},
\quad q=1,\ldots,n.
\end{aligned}
\label{prob:P2}
\end{equation}
The $\overline{P}_{\mathrm{FLAS}}^{\mathrm{total}}$ and $\mathcal{E}_{\mathrm{FLAS}}$ are given by Theorems~\ref{thm:flas_power} and~\ref{thm:flas_mse}, respectively.
The scaling upper bounds follow from Section~4.1, 
$\overline{\kappa}_{\mathrm{L},\ell}
\triangleq(g_{\max}-g_{\min})/|l_\ell^{\max}|$
and
$\overline{\kappa}_{\mathrm{R},q}
\triangleq(g_{\max}-g_{\min})/|r_q^{\max}|$.

(P2) is a mixed-integer nonlinear program.
We enumerate all integer triples
$(k,t_{\mathrm{L}},t_{\mathrm{R}})$ satisfying the rank,
replication count, and memristor budget constraints.
The factors and their scaling bounds are recomputed for each
candidate rank.
For each triple, we numerically optimize
$\boldsymbol{\kappa}_{\mathrm{L}}$ and
$\boldsymbol{\kappa}_{\mathrm{R}}$
subject to the MSE and scaling constraints.
Among the feasible solutions obtained, we select the one
with the lowest total power.

\vspace{-5pt}
\subsection{Numerical results}
\vspace{-5pt}
We compare the power--MSE trade-offs of FLAS and baseline
using two matrix examples.
The parameters are
$g_{\min}=1\,\mu\mathrm{S}$,
$g_{\max}=100\,\mu\mathrm{S}$, and
$\tau_v^2=5\times10^{-7}$.
We construct two fixed full-rank matrices
$\mathbf A=\mathbf U\boldsymbol{\Sigma}\mathbf V^{\mathsf T}
\in\mathbb R^{24\times24}$
with the same randomly generated orthogonal factors
$\mathbf U$ and $\mathbf V$.
Their singular values are
$\sigma_i=3e^{-0.55(i-1)}$ and $\sigma_i=3/i$,
$i=1,\ldots,24$, corresponding to fast and moderate spectral
decay, respectively.
We draw one input vector with independent entries
$b_i\sim\mathcal U(0.02,0.20)$ and use the same realization
for both matrices.
Accuracy is measured by
$\mathrm{NMSE}
=\mathbb E_N[\|\widehat{\mathbf c}-\mathbf b\mathbf A\|_2^2]
 /\|\mathbf b\mathbf A\|_2^2$,
where the expectation is over programming noise.
We verified that the component magnitudes of $\mathbf b\mathbf U$
are reasonably balanced, with no exceptionally large or small values.
The comparison is therefore not dominated by atypical input
alignment with particular left singular vectors.

Fig.~\ref{baseline_FLAS_nmse_compare} compares the power--MSE
trade-offs obtained under the common memristor budget $2mn$.
We set $r_T=5\,\mathrm{k}\Omega$ for this figure.
FLAS reduces power by $61.7\%$ and $32.5\%$ at the marked
operating points for fast and moderate spectral decay, respectively.
For the fast-decaying matrix, FLAS can satisfy the MSE tolerance
with aggressive low-rank truncation, retaining only a small
number $k$ of singular components.
This substantially reduces the memristor count required for
the two-stage implementation and enables lower power.
Consequently, FLAS outperforms baseline throughout the common
feasible NMSE range shown.
For the moderately decaying matrix, the curves cross near
NMSE $0.074$, below which baseline consumes less power.
At smaller NMSE, FLAS needs the larger rank $k$ or higher conductance
levels to maintain accuracy, which can offset the power savings
from factorization.
At the marked operating points, $t_L = 1, t_R = 1$ implies that the reported savings are achieved without replication.

\begin{figure}[htbp]
	\centering
    \vspace{-5pt}
	\includegraphics[width = 0.9 \linewidth]{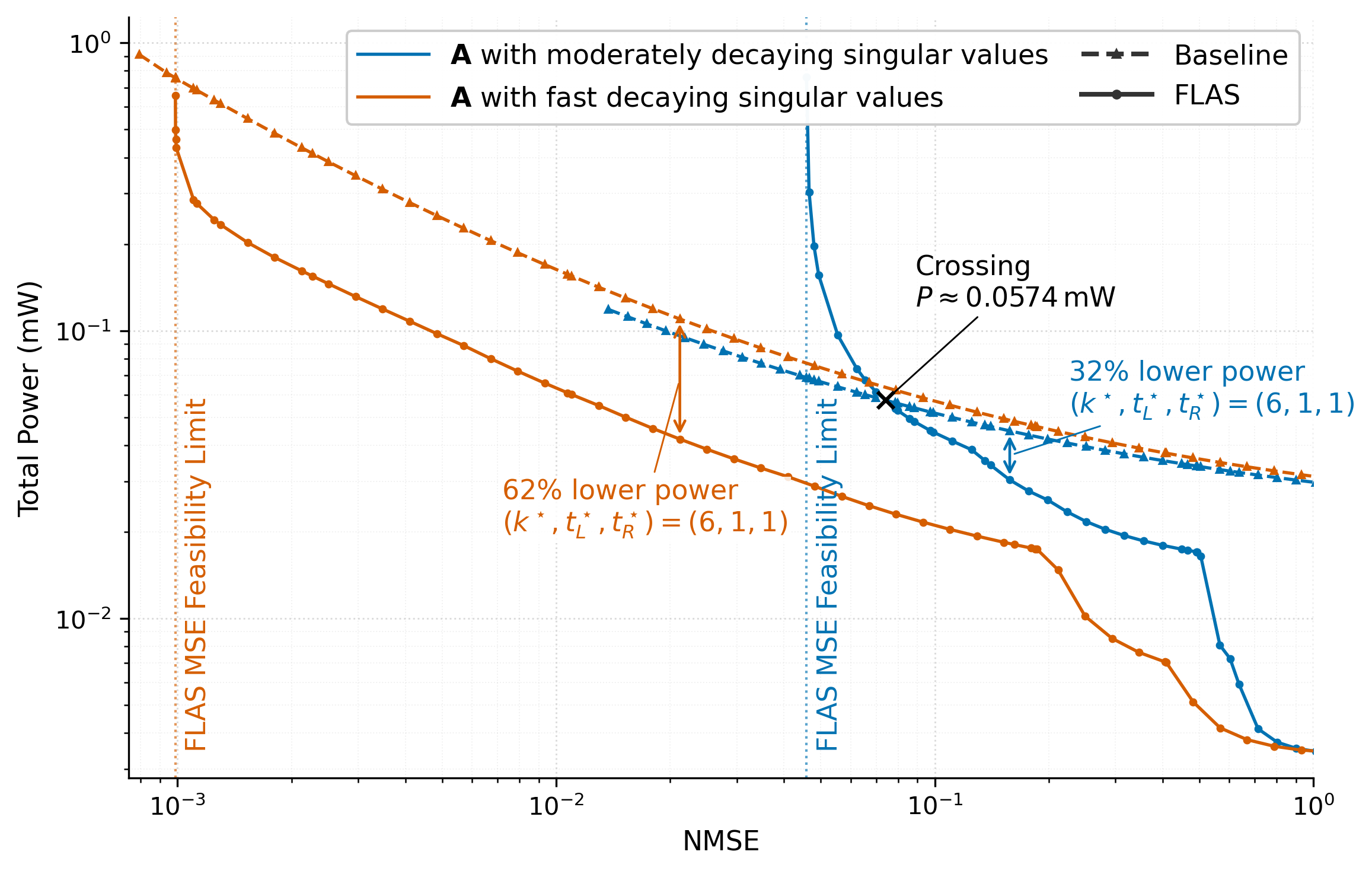}
	\vspace{-10pt}
	\caption{Power--NMSE trade-offs of FLAS and baseline
under a common memristor budget $2mn$.\hspace{50pt}}
    \vspace{-5pt}
	\label{baseline_FLAS_nmse_compare}
\end{figure}

Fig.~\ref{differ_r_T} examines how the TIA feedback resistance
affects total power and the fraction of power consumed by TIAs.
For the moderately decaying matrix, we select two pairs of
operating points from the FLAS and baseline curves in
Fig.~\ref{baseline_FLAS_nmse_compare}, at NMSE limits
$0.06$ and $0.10$.
These limits lie near and on opposite sides of the crossing
at approximately $0.074$.
The two pairs therefore allow us to examine how feedback
resistance affects these opposite power advantages.
Keeping the matrix, input, and NMSE limits fixed, we sweep
$r_T$ over ten logarithmically spaced values from
$1$ to $50\,\mathrm{k}\Omega$ and optimize power of both schemes
at each $r_T$. 
The reliable range of $r_T$ in practive is usually $5$ to $20\,\mathrm{k}\Omega$.
At NMSE $0.10$, FLAS has a smaller TIA power fraction.
At NMSE $0.06$, baseline has the smaller TIA power fraction.
For both FLAS and baseline, the relative total power advantage increases with $r_T$.
These results show that the NMSE tolerance determines which scheme
consumes less power, while increasing $r_T$ strengthens the advantage
of the scheme with the smaller TIA power fraction.

\begin{figure}[htb]
\begin{minipage}[b]{0.48\linewidth}
  \centering
  \centerline{\includegraphics[width=5.0cm]{  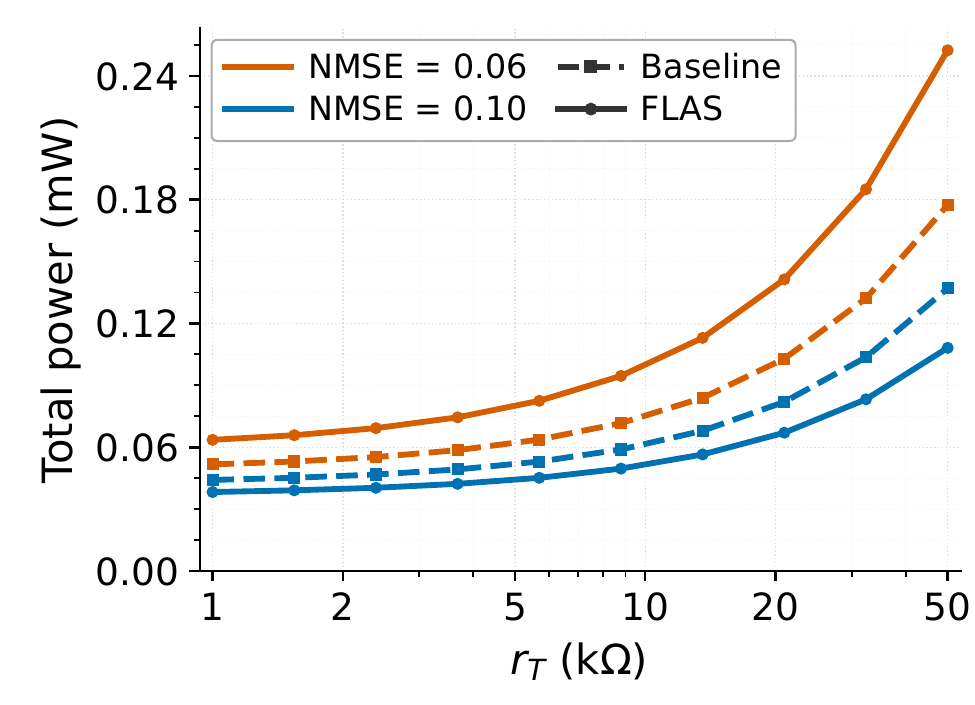}}
  \centerline{(a)}\medskip
\end{minipage}
\hfill
\begin{minipage}[b]{0.49\linewidth}
  \centering
  \centerline{\includegraphics[width=5.0cm]{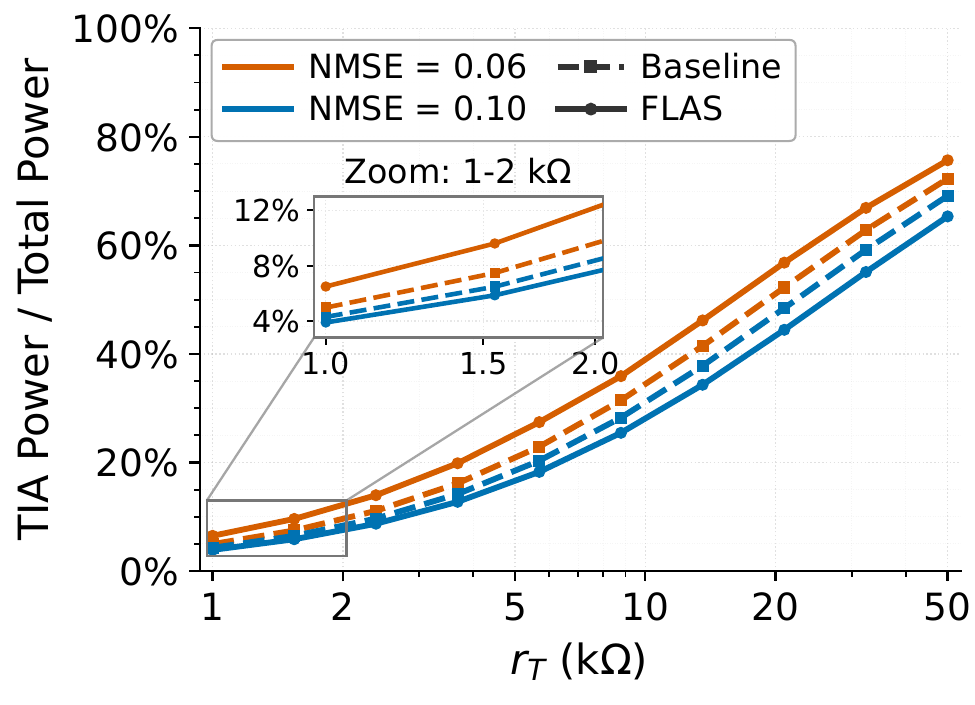}}
\centerline{(b)}\medskip
\end{minipage}

 \vspace{-10pt}
\caption{Effect of TIA feedback resistance for the moderately
decaying matrix at NMSE limits $0.06$ and $0.10$.
(a) Total power. (b) Fraction of total power consumed by TIAs.}
\label{differ_r_T}
\end{figure}

\vspace{-15pt}
\section{Conclusion}

We extended FLAS with adjustable conductance scaling and provides the power--MSE trade-off analysis of FLAS and baseline.
The numerical results show that FLAS's power advantage over baseline depends on matrix spectral decay and MSE.
The TIA power fractions explain how this advantage changes with feedback resistance.

\newpage
\bibliographystyle{IEEEbib}
\bibliography{refs_new}
\end{document}